\numberwithin{equation}{section}
\theoremstyle{plain}
\newtheorem{theorem}{Theorem}
\newtheorem{remark}[theorem]{Remark}
\begin{document}

\title[A Delayed Dual Risk Model]{A Delayed Dual Risk Model}

\author{Lingjiong Zhu}
\address
{Department of Mathematics \newline
\indent Florida State University \newline
\indent 1017 Academic Way \newline
\indent Tallahassee, FL-32306 \newline
\indent United States of America}
\email{
zhu@math.fsu.edu}

\date{4 September 2016.}
\subjclass[2000]{91B30;91B70} 
\keywords{dual risk model, random delay, ruin probability, ruin time distribution.}

\begin{abstract}
In this paper, we study a dual risk model with delays in the spirit of Dassios-Zhao. When a new innovation 
occurs, there is a delay before the innovation turns into a profit. We obtain
large initial surplus asymptotics for the ruin probability and ruin time distributions. 
For some special cases, we get closed-form formulas. Numerical illustrations
will also be provided.
\end{abstract}

\maketitle

\section{Introduction}

The classic risk model is based on the surplus process $X_{t}=x+\rho t-\sum_{i=1}^{N_{t}}Y_{i}$,
where the insurer starts with the initial reserve $x$ and receives the premium at a constant rate $\rho$ 
and $Y_{i}$ are the claims. In recent years, 
a dual risk model has attracted substantial attention, 
in which the surplus process is modeled as
\begin{equation}\label{DualModel}
dX_{t}=-\rho dt+dJ_{t},\qquad X_{0}=x>0,
\end{equation}
where $\rho>0$ is the cost of running the company and $J_{t}=\sum_{i=1}^{N_{t}}Y_{i}$, 
is the stream of profits, where $Y_{i}$ are i.i.d. $\mathbb{R}^{+}$ valued random variables
with common probability density function $p(y)$, $y>0$
and they denote the profits from the innovations and
sometimes are referred to as the innovation sizes in the insurance
and finance literature and $N_{t}$
is a Poisson process with intensity $\lambda>0$. 
The dual risk model is used to model the wealth of a venture capital, an oil company, see e.g. \cite{Avanzi} 
or any business with random gains, see e.g. \cite{CheungI}.

Let $\tau$ be the first time that the wealth process
hits zero, that is,
\begin{equation}\label{ClassicalRuin}
\tau:=\inf\{t>0:X_{t}\leq 0\}.
\end{equation}
Let us assume the net condition holds:
\begin{equation}
\lambda\mathbb{E}[Y_{1}]>\rho.
\end{equation}
The net condition will always be assumed throughout the paper unless specified otherwise.
The infinite horizon ruin probability is given by, see e.g. Avanzi et al. \cite{Avanzi}
\begin{equation}
\mathbb{P}(\tau<\infty|X_{0}=x)=e^{-\alpha x},
\end{equation}
where $\alpha$ is the unique positive value that satisfies
the equation:
\begin{equation}\label{alphaEqn}
\alpha\rho+\lambda\int_{0}^{\infty}[e^{-\alpha y}-1]p(y)dy=0.
\end{equation}

The classical risk model with delays has been studied in the insurance literature
since Waters and Papatriandafylou \cite{Waters}.
Claims could have already occurred but have not been settled or reported
immediately. The delay can be caused by many different reasons, e.g.
IBNR (Incurred But Not Reported) and IBNR (Reported But Not Settled).
A discrete-time model for a risk process allowing claims being delayed
was considered in Waters and Papatriandafylou \cite{Waters}
and Trufin et al. \cite{Trufin}. Boogaert and Haezendonck \cite{Boogaert}
studied a liability process with delays in the framework of economical environment. 
Yuen et al. \cite{Yuen} introduced a continuous time model with one claim settled
immediately and the other claim, named by-claim, settled with delay for the each time
of claim occurrences. Dassios and Zhao \cite{DassiosZhao}
studied a classical risk model with delays, in which the claim sizes have light-tailed distributions
and each of the claims will be settled in a randomly delayed period of time. 
They used the connection to the non-homogenous Poisson process to obtain
the asymptotic expressions for the ruin probability and a finer asympotitc formula
is obtained for exponentially delayed claims and an exact formula is derived
when the claims are also exponentially distributed.
Delaying claims have also been modeled
using the Poisson shot noise processes or a Cox process with shot noise intensity, 
see e.g. Kl\"{u}ppelberg and Mikosch \cite{Kluppelberg}, Br\'{e}maud \cite{Bremaud}, Macci and Torrisi \cite{Macci}
and Albrecher and Asmussen \cite{AA}. 

We are interested to study a delayed dual risk model. 
Our model is analogous to the delayed-claim model for the classical
risk model studied by Dassios and Zhao \cite{DassiosZhao}. The dual risk model
can be used to model a venture capital, high tech company, oil company or any business
with random growth \cite{CheungI}. In \eqref{DualModel}, $N_{t}$ is the arrival
times of the innovations, which are also taken as the arrival times of the profits. 
In the standard dual risk model, 
the profits arrive as soon as the innovations occur. 
In reality, there is usually a period of delay before
the innovation can be turned into a profit. 
For example, there can be a delay caused by patent application.
The present average wait time until the USPTO (United States Patent and Trademark Office) 
provides the results of the Patent Examiner's first substantive review and examination 
(average pendency to first office action) of the patent application is about 17 months. 
The average time it takes to obtain a patent from the patent office at this time is about 27 months. 
See the current wait time statistics at the USPTO website 
\footnote{\texttt{http://www.uspto.gov/dashboards/patents/main.dashxml}}.
Another example is the exploration for the oil companies.
After the discovery of a new oil field, it takes time for the oil company to make
profits from the discovery. Time is needed for activities such as building
surface infrastructure and pipelines etc., before the production can start, which causes a delay.
For instance, in US, approval of an interstate pipeline takes an average time of 15 months
and in addition, the pipeline construction usually takes 6 to 18 months.
See the website of US Energy Information Administration 
\footnote{\texttt{http://www.eia.gov/pub/oil$_{-}$gas/natural$_{-}$gas/analysis$_{-}$publications
/ngpipeline/develop.html}}.
The delay can also caused by the regulations. 
The influence of regulations on innovation has been well studied 
in the economics literature, see e.g. \cite{Blind}.
The regulatory delay exists when the regular does not allow
the introduction of new products without regulatory review and approval. 
The regulatory delay is the time between the firm's submission of a new product to the regulator
for approval and the granting of approval, see e.g. \cite{Prieger}.
The regulated firms in the telecommunications, pharmaceutical, banking and other industries
often claim that regulatory delays are long and costly. The data for regulatory delays
for several states in US can be found in \cite{Prieger}.
The regulatory delays can have an effect on the firm's innovations.
The longer the regulatory delay, the less a regulated firm will be willing
to spend in R\&D on innovation, see \cite{Braeutigam}.
Regulatory delay exerts a multiplier effect on total time to market, 
because when the firm expects the regulator to take longer to grant approval, 
the firm delays its product introduction, see e.g. \cite{PriegerII}.
All these examples discussed above motivate us to study a dual risk model with delays.

Before we proceed, let us first have a brief review of the dual risk models
in the finance and insurance literature. 
Avanzi et al. \cite{Avanzi} studied the optimal dividend problem, i.e., De Finetti problem \cite{DeFinetti},
for the dual risk model. The optimal strategy is given by a barrier strategy
and more explicit formula was obtained when the innovation sizes are 
exponentially distributed. Afonso et al. \cite{Afonso} established a connection
between the dual and classical risk models, presented a new approach
for computing the expected discounted dividends and derived some known
and also some new results. Cheung and Drekic \cite{CheungII} studied
the dividend moments in the dual risk model.
Ng \cite{Ng} also studied the optimal dividend problem in the dual risk model.
Unlike \cite{Avanzi}, Ng \cite{Ng} considered the case when the surplus is above
the threshold, the dividend is paid out at a constant rate to the shareholders
while in \cite{Avanzi} the entire surplus above the threshold is paid out immediately
to the shareholders as the dividend. Ng \cite{Ng} derived a set of two integro-differential
equations satisfied by the expected total discounted dividends until ruin and showed
the equations can be solved by using only one of the two integro-differential equations.
Albrecher et al. \cite{Albrecher}
introduced tax payments to the dual risk model. The ruin probability, Laplace transform
of the ruin time, moments of the discounted tax payments were computed. The critical surplus
level at which it is optimal to collect tax payments was also determined.
Ng \cite{NgII} studied the dual risk model when the random gains follow a phase-type distribution. 
Two pairs of upcrossing and downcrossing barrier probabilities were derived.
Avanzi et al. \cite{AvanziII} considered the dividends in a dual risk model in which
dividend decisions are made periodically, but ruin can still occur at any time. 
There are also works on the generalization of the Poisson dual risk model.
For example, Rodr\'{i}guez et al. \cite{RCE} studied the Erlang(n) dual risk model,
in which they assumed that the waiting times are Erlang(n) distributed, instead of exponentially
distributed. They obtained ruin probability and the Laplace transform of ruin time using
the roots of the fundamental and the generalized Lundberg's equation. Erlang(n) dual risk model
was also used in Avanzi et al. \cite{AvanziII}.
Yang and Sendova \cite{YS} studied the properties of the ruin time for the Sparre-Andersen dual model.
Zhu \cite{Zhu} studied state-dependent dual risk model, in which both the running cost
and the arrival of future profits are dependent on the surplus level. 
Bayraktar and Egami \cite{BE} studied venture capital investment using a dual risk model.
Recently, Fahim and Zhu \cite{FZ} studied the optimal investment in research
and investment to boost the future profits using a dual risk model. 
Other than the classical ruin time, Parisian ruin times have also been studied recently
for the dual risk models, see e.g. \cite{YSII}. 

Now let us introduce the model.
Consider the dual risk model \eqref{DualModel}.
Let us introduce the random delays in the spirit of Dassios and Zhao \cite{DassiosZhao}.
Let $N_{t}$ be the number of realized profits within the time interval $[0,t]$ and assume $N_{0}=0$.
$\{T_{k}\}_{k=1}^{\infty}$, $\{L_{k}\}_{k=1}^{\infty}$
denote the random times of the innovation arrival and the corresponding delays.
Therefore, $\{T_{k}+L_{k}\}_{k=1}^{\infty}$ are the random times
when the profits are realized. 
Thus,
\begin{equation}
N_{t}=\sum_{k=1}^{\infty}1_{T_{k}+L_{k}\leq t}.
\end{equation}

We assume that $\{T_{k}-T_{k-1}\}_{k\geq 1}$ (with $T_{0}:=0$) are i.i.d. exponentially distributed 
with parameter $\lambda>0$ so that
without delay, the arrival process is a Poisson process with parameter $\lambda>0$. 
$L_{k}$ are assumed to be independent and identically distributed non-negative
random variables with cumulative distribution function $L(t)$ and we define $\bar{L}(t):=1-L(t)$. 

The ruin time after time $t\geq 0$ is defined as
\begin{equation}
\tau_{t}:=
\inf\{s:s>t, X_{s}\leq 0\},
\end{equation}
and $\tau_{t}=\infty$ if the ruin never occurs. 

We are interested in the ultimate ruin probability at time $t$, that is,
\begin{equation}
\psi(x,t):=\mathbb{P}(\tau_{t}<\infty|X_{t}=x),
\end{equation}
or equivalently the ultimate survival probability at time $t$, that is,
\begin{equation}
\phi(x,t):=1-\psi(x,t).
\end{equation}

The key observation is that a delayed (or displaced) Poisson process
is still a (non-homogeneous) Poisson process, see e.g. Mirasol \cite{Mirasol}, 
and also see Dassios and Zhao \cite{DassiosZhao} and more precisely,
$N_{t}$ is a non-homogeneous Poisson process
with intensity $\lambda L(t)$ at time $t$.

Therefore, similarly as in Dassios and Zhao \cite{DassiosZhao}, the ultimate survival probability $\phi(x,t)$ satisfies the equation:
\begin{equation}\label{survivalEqn}
\frac{\partial\phi}{\partial t}
-\rho\frac{\partial\phi}{\partial x}
+\lambda L(t)\int_{0}^{\infty}[\phi(x+y,t)-\phi(x,t)]p(y)dy=0,
\end{equation}
with the boundary condition $\phi(0,t)\equiv 0$. 
Similarly, one can also write down the equation
for the Laplace transform of the ruin time.

In this paper, we are interested to study the ultimate ruin probability, 
the Laplace transform of the ruin time, and the probability density
function of the ruin time. Asymptotics for 
the large initial surplus will be obtained. We will also 
discuss some exactly solvable examples for which
the ultimate ruin probability, Laplace transform of the ruin time,
and the probability density function of the ruin time have
closed-form formulas. Finally, some numerical illustrations will be provided.


\section{Large Initial Surplus Asymptotics}

The ultimate survival probability satisfies the equation \eqref{survivalEqn}
which is challenging to solve in closed-form. Instead, we give 
non-trivial lower and upper bound estimates for the ruin probabilities, ruin time
and provide large initial surplus asymptotics. 

\begin{theorem}\label{ThmI}
Assume that $\int_{0}^{\infty}\bar{L}(t)dt<\infty$.
We have the following estimate for the ruin probability:
\begin{equation}
e^{-\alpha x}e^{\alpha\rho\int_{t}^{\infty}\bar{L}(s)ds}
e^{-\alpha\rho\int_{\frac{x}{\rho}+t}^{\infty}\bar{L}(s)ds}
\leq\mathbb{P}(\tau_{t}<\infty|X_{t}=x)
\leq e^{-\alpha x}e^{\alpha\rho\int_{t}^{\infty}\bar{L}(s)ds},
\end{equation}
where $\alpha$ is the unique positive value that satisfies \eqref{alphaEqn}.
In particular, as $x\rightarrow\infty$, we have
\begin{equation}
\mathbb{P}(\tau_{t}<\infty|X_{t}=x)
\sim e^{-\alpha x}e^{\alpha\rho\int_{t}^{\infty}\bar{L}(s)ds}.
\end{equation}
\end{theorem}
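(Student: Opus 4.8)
The plan is to produce the exponential martingale behind $e^{-\alpha x}$ for the time-inhomogeneous dynamics and combine it with an optional-stopping argument. Conditionally on $X_t=x$, the surplus evolves for $s\ge t$ as $X_s=x-\rho(s-t)+\sum_{t<T_i+L_i\le s}Y_i$, where, as recalled in the excerpt, the realized profits arrive as a non-homogeneous Poisson process of intensity $\lambda L(s)$ at time $s$. By Dynkin's formula,
\[
e^{-\alpha X_s}-e^{-\alpha X_t}-\int_t^s\Big[\alpha\rho+\lambda L(u)\int_0^\infty(e^{-\alpha y}-1)p(y)\,dy\Big]e^{-\alpha X_u}\,du
\]
is a martingale, and by \eqref{alphaEqn}, i.e.\ $\lambda\int_0^\infty(e^{-\alpha y}-1)p(y)\,dy=-\alpha\rho$, the bracket equals $\alpha\rho\bar{L}(u)$. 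Put $f(s):=e^{\alpha\rho\int_s^\infty\bar{L}(u)\,du}$, which is finite, nonincreasing, and tends to $1$ because $\int_0^\infty\bar{L}<\infty$. Since $f'(s)+\alpha\rho\bar{L}(s)f(s)=0$, the process $M_s:=f(s)e^{-\alpha X_s}$, $s\ge t$, has vanishing drift; after stopping at $\tau_t$ it is bounded by $f(t)$, hence a genuine martingale.

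Now I would apply optional stopping at $\tau_t\wedge T$, getting $\mathbb{E}[M_{\tau_t\wedge T}\mid X_t=x]=M_t=e^{-\alpha x}f(t)$, and let $T\to\infty$. On $\{\tau_t\le T\}$, since $X$ has only upward jumps and otherwise decreases continuously at rate $\rho$, it reaches zero continuously, so $X_{\tau_t}=0$ and $M_{\tau_t}=f(\tau_t)$; on $\{\tau_t>T\}$ the net condition forces $X_T\to\infty$, so $M_T1_{\{\tau_t>T\}}\to0$. Both contributions are dominated by $f(t)$, so dominated convergence yields
\[
e^{-\alpha x}\,e^{\alpha\rho\int_t^\infty\bar{L}(u)\,du}=\mathbb{E}\!\left[e^{\alpha\rho\int_{\tau_t}^\infty\bar{L}(u)\,du}1_{\{\tau_t<\infty\}}\,\Big|\,X_t=x\right].
\]
Both bounds drop out of this identity. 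As $\bar{L}\ge0$, on $\{\tau_t<\infty\}$ one has $e^{\alpha\rho\int_{\tau_t}^\infty\bar{L}}\ge1$, so the right-hand side is $\ge\mathbb{P}(\tau_t<\infty\mid X_t=x)$, which is the upper bound. For the lower bound, on $\{\tau_t<\infty\}$,
\[
0=X_{\tau_t}=x-\rho(\tau_t-t)+\sum_{t<T_i+L_i\le\tau_t}Y_i\ge x-\rho(\tau_t-t),
\]
so $\tau_t\ge t+x/\rho$; hence $e^{\alpha\rho\int_{\tau_t}^\infty\bar{L}}\le e^{\alpha\rho\int_{x/\rho+t}^\infty\bar{L}}$ on that event, the right-hand side is $\le e^{\alpha\rho\int_{x/\rho+t}^\infty\bar{L}(u)\,du}\,\mathbb{P}(\tau_t<\infty\mid X_t=x)$, and rearranging gives the lower bound. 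Finally, $\int_0^\infty\bar{L}<\infty$ makes the tail $\int_{x/\rho+t}^\infty\bar{L}(u)\,du\to0$ as $x\to\infty$, so the two bounds are asymptotically equal and the asymptotic formula follows.

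The main obstacle is not the algebra but the probabilistic bookkeeping: one must check that, conditionally on $\mathcal{F}_t$ (in particular on $X_t=x$), the future realized-profit stream is genuinely a non-homogeneous Poisson process of intensity $\lambda L(\cdot)$ independent of the past — this is where the displaced-Poisson fact of Mirasol \cite{Mirasol} enters — and that $(M_{s\wedge\tau_t})_{s\ge t}$ is integrable so that optional stopping and dominated convergence are legitimate; the uniform bound $M_{s\wedge\tau_t}\le f(t)$ makes both routine. As an alternative one could verify directly that $W_+(x,t):=e^{-\alpha x}e^{\alpha\rho\int_t^\infty\bar{L}(s)\,ds}$ and $W_-(x,t):=W_+(x,t)\,e^{-\alpha\rho\int_{x/\rho+t}^\infty\bar{L}(s)\,ds}$ are respectively a super- and a sub-solution of \eqref{survivalEqn} (which $\psi=1-\phi$ also satisfies, the operator annihilating constants), with $W_+(0,t)\ge1=\psi(0,t)$, $W_-(0,t)=1$, and $W_\pm(x,t)\to0$ as $x\to\infty$, and then invoke a comparison principle; the backward-in-time character of that equation makes the probabilistic route cleaner.
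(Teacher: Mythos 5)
Your proposal is correct and takes essentially the same route as the paper: an It\^{o}/optional-stopping argument built on the function $e^{-\alpha x}e^{\alpha\rho\int_{t}^{\infty}\bar{L}(s)ds}$, the observation that $X_{\tau_{t}}=0$ on $\{\tau_{t}<\infty\}$, and the deterministic bound $\tau_{t}\geq t+\tfrac{x}{\rho}$. The only difference is cosmetic but pleasant: by choosing the integrating factor $f(s)=e^{\alpha\rho\int_{s}^{\infty}\bar{L}(u)du}$ so that the drift vanishes exactly, you arrive at the clean identity $e^{-\alpha x}f(t)=\mathbb{E}_{X_t=x}\bigl[f(\tau_{t})1_{\{\tau_{t}<\infty\}}\bigr]$ and read off both bounds in one line each, whereas the paper works with $(1-e^{-\alpha x})e^{\alpha\rho\int_{t}^{\infty}\bar{L}(s)ds}$ and must separately bound the resulting residual integral $\mathbb{E}\bigl[\int_{t}^{\tau_{t}}\alpha\rho\bar{L}(s)e^{\alpha\rho\int_{s}^{\infty}\bar{L}(u)du}ds\bigr]$ from above and below.
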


\begin{proof}
Consider $w(x,t)=(1-e^{-\alpha x})e^{-c\int_{t}^{\infty}\bar{L}(s)ds}$.
Then, 
\begin{align}
&\frac{\partial w}{\partial t}
-\rho\frac{\partial w}{\partial x}
+\lambda L(t)\int_{0}^{\infty}[w(x+y,t)-w(x,t)]p(y)dy
\\
&=c\bar{L}(t)(1-e^{-\alpha x})e^{-c\int_{t}^{\infty}\bar{L}(s)ds}
+\lambda\bar{L}(t)e^{-c\int_{t}^{\infty}\bar{L}(s)ds}
e^{-\alpha x}\int_{0}^{\infty}[e^{-\alpha y}-1]p(y)dy
\nonumber
\\
&=c\bar{L}(t)e^{-c\int_{t}^{\infty}\bar{L}(s)ds},
\nonumber
\end{align}
if we choose
\begin{equation}
c=-\lambda\int_{0}^{\infty}[1-e^{-\alpha y}]p(y)dy=-\alpha\rho.
\end{equation}
It is clear that $w(x,t)\in C^{1,1}_{b}$, that is, bounded and continuously differentiable in both $x$ and $t$. 
Hence, by It\^{o}'s formula,
\begin{equation}
\mathbb{E}_{X_{t}=x}[w(X_{\tau_{t}},\tau_{t})]=w(x,t)-\mathbb{E}_{X_{t}=x}\left[\int_{t}^{\tau_{t}}\alpha\rho
\bar{L}(s)e^{\alpha\rho\int_{s}^{\infty}\bar{L}(u)du}ds\right].
\end{equation}
Notice that $X_{\tau_{t}}=0$ if $\tau_{t}<\infty$ and otherwise it is infinity. 
Therefore, 
\begin{equation}
\mathbb{E}_{X_{t}=x}[w(X_{\tau_{t}},\tau_{t})]
=\mathbb{E}_{X_{t}=x}\left[(1-e^{-\alpha X_{\tau_{t}}})e^{\alpha\rho\int_{\tau_{t}}^{\infty}\bar{L}(s)ds}\right]
=\mathbb{P}(\tau_{t}=\infty|X_{t}=x).
\end{equation}
Thus, we have
\begin{equation}\label{EqnI}
\mathbb{P}(\tau_{t}=\infty|X_{t}=x)-(1-e^{-\alpha x})e^{\alpha\rho\int_{t}^{\infty}\bar{L}(s)ds}
=-\mathbb{E}_{X_{t}=x}\left[\int_{t}^{\tau_{t}}\alpha\rho
\bar{L}(s)e^{\alpha\rho\int_{s}^{\infty}\bar{L}(u)du}ds\right].
\end{equation}
It is clear that
\begin{equation}
\mathbb{E}_{X_{t}=x}\left[\int_{t}^{\tau_{t}}\alpha\rho
\bar{L}(s)e^{\alpha\rho\int_{s}^{\infty}\bar{L}(u)du}ds\right]
\leq\int_{t}^{\infty}\alpha\rho
\bar{L}(s)e^{\alpha\rho\int_{s}^{\infty}\bar{L}(u)du}ds.
\end{equation}

On the other hand, we have
\begin{align}
&\mathbb{E}_{X_{t}=x}\left[\int_{t}^{\tau_{t}}\alpha\rho
\bar{L}(s)e^{\alpha\rho\int_{s}^{\infty}\bar{L}(u)du}ds\right]
\\
&\geq\mathbb{P}(\tau_{t}=\infty|X_{t}=x)\int_{t}^{\infty}\alpha\rho
\bar{L}(s)e^{\alpha\rho\int_{s}^{\infty}\bar{L}(u)du}ds
\nonumber
\\
&\qquad\qquad\qquad
+\mathbb{E}_{X_{t}=x}\left[\int_{t}^{\tau_{t}}\alpha\rho
\bar{L}(s)e^{\alpha\rho\int_{s}^{\infty}\bar{L}(u)du}ds\cdot 1_{\tau_{t}<\infty}\right]
\nonumber
\\
&\geq
\mathbb{P}(\tau_{t}=\infty|X_{t}=x)\int_{t}^{\infty}\alpha\rho
\bar{L}(s)e^{\alpha\rho\int_{s}^{\infty}\bar{L}(u)du}ds
\nonumber
\\
&\qquad\qquad\qquad
+\int_{t}^{\frac{x}{\rho}+t}\alpha\rho
\bar{L}(s)e^{\alpha\rho\int_{s}^{\infty}\bar{L}(u)du}ds\mathbb{P}(\tau_{t}<\infty|X_{t}=x),
\nonumber
\end{align}
where we used the fact that $\tau_{t}\geq\frac{x}{\rho}+t$ a.s.

We can compute that
\begin{align}\label{EqnIII}
\int_{t}^{\infty}\alpha\rho
\bar{L}(s)e^{\alpha\rho\int_{s}^{\infty}\bar{L}(u)du}ds
&=-\int_{t}^{\infty}e^{\alpha\rho\int_{s}^{\infty}\bar{L}(u)du}d\left(\alpha\rho\int_{s}^{\infty}\bar{L}(u)du\right)
\\
&=e^{\alpha\rho\int_{t}^{\infty}\bar{L}(s)ds}-1.
\nonumber
\end{align}
Therefore,
\begin{align}\label{EqnII}
&\mathbb{E}_{X_{t}=x}\left[\int_{t}^{\tau_{t}}\alpha\rho
\bar{L}(s)e^{\alpha\rho\int_{s}^{\infty}\bar{L}(u)du}ds\right]
\\
&\geq
\int_{t}^{\infty}\alpha\rho
\bar{L}(s)e^{\alpha\rho\int_{s}^{\infty}\bar{L}(u)du}ds
-\int_{\frac{x}{\rho}+t}^{\infty}\alpha\rho
\bar{L}(s)e^{\alpha\rho\int_{s}^{\infty}\bar{L}(u)du}ds\mathbb{P}(\tau_{t}<\infty|X_{t}=x)
\nonumber
\\
&=e^{\alpha\rho\int_{t}^{\infty}\bar{L}(s)ds}-1-\left(e^{\alpha\rho\int_{\frac{x}{\rho}+t}^{\infty}\bar{L}(s)ds}-1\right)
\mathbb{P}(\tau_{t}<\infty|X_{t}=x)
\nonumber
\end{align}
Hence, from \eqref{EqnI} and \eqref{EqnII}, we have
\begin{align}
&\mathbb{P}(\tau_{t}=\infty|X_{t}=x)-(1-e^{-\alpha x})e^{\alpha\rho\int_{t}^{\infty}\bar{L}(s)ds}
\\
&\leq 1-e^{\alpha\rho\int_{t}^{\infty}\bar{L}(s)ds}-\left(1-e^{\alpha\rho\int_{\frac{x}{\rho}+t}^{\infty}\bar{L}(s)ds}\right)
\mathbb{P}(\tau_{t}<\infty|X_{t}=x)
\nonumber
\\
&=e^{\alpha\rho\int_{\frac{x}{\rho}+t}^{\infty}\bar{L}(s)ds}
+\left(1-e^{\alpha\rho\int_{\frac{x}{\rho}+t}^{\infty}\bar{L}(s)ds}\right)\mathbb{P}(\tau_{t}=\infty|X_{t}=x)
-e^{\alpha\rho\int_{t}^{\infty}\bar{L}(s)ds}
\nonumber
\end{align}
which implies that
\begin{equation}
\mathbb{P}(\tau_{t}=\infty|X_{t}=x)
\leq 1-e^{-\alpha x}e^{\alpha\rho\int_{t}^{\infty}\bar{L}(s)ds}e^{-\alpha\rho\int_{\frac{x}{\rho}+t}^{\infty}\bar{L}(s)ds}.
\end{equation}

On the other hand, from \eqref{EqnI} and \eqref{EqnIII}, we have 
\begin{equation}
\mathbb{P}(\tau_{t}=\infty|X_{t}=x)
\geq(1-e^{-\alpha x})e^{\alpha\rho\int_{t}^{\infty}\bar{L}(s)ds}
+1-e^{\alpha\rho\int_{t}^{\infty}\bar{L}(s)ds}
=1-e^{-\alpha x}e^{\alpha\rho\int_{t}^{\infty}\bar{L}(s)ds}.
\end{equation}
Hence, we conclude that
\begin{equation}
e^{-\alpha x}e^{\alpha\rho\int_{t}^{\infty}\bar{L}(s)ds}
e^{-\alpha\rho\int_{\frac{x}{\rho}+t}^{\infty}\bar{L}(s)ds}
\leq\mathbb{P}(\tau_{t}<\infty|X_{t}=x)
\leq e^{-\alpha x}e^{\alpha\rho\int_{t}^{\infty}\bar{L}(s)ds}.
\end{equation}
\end{proof}

\begin{remark}
In Dassios and Zhao \cite{DassiosZhao}, they obtained the large initial
surplus asymptotics for the ruin probabilities for classical risk model
with delays by taking the Laplace transform of the integro-differential equation
with respect to the space variable and analyzing the equation. Finer
estimates are obtained for exponentially delayed claims and explicit formulas
are derived when the claims also also exponentially distributed. It will be
interesting to see if similar techniques can be applied to study the dual risk model
with delays.
\end{remark}

Next, let us give some examples
for which the asymptotics in Theorem \ref{ThmI} becomes explicit.

\begin{remark}
Let us assume that $p(y)=\nu e^{-\nu y}$ for some $\gamma>0$.
Then from \eqref{alphaEqn}, we get $\alpha=\frac{\lambda}{\rho}-\nu$.

(i) (sub-exponential delay) Assume that $\bar{L}(t)=\frac{1}{(1+t)^{\gamma}}$ where $\gamma>1$. Then
\begin{equation}
\mathbb{P}(\tau_{t}<\infty|X_{t}=x)\sim e^{-(\frac{\lambda}{\rho}-\nu)x
+(\lambda-\nu\rho)\frac{1}{\gamma-1}\frac{1}{(1+t)^{\gamma-1}}},\qquad\text{as $x\rightarrow\infty$}.
\end{equation}

(ii) (exponential delay) Assume that $\bar{L}(t)=e^{-\gamma t}$ where $\gamma>0$. Then
\begin{equation}
\mathbb{P}(\tau_{t}<\infty|X_{t}=x)\sim e^{-(\frac{\lambda}{\rho}-\nu)x
+(\lambda-\nu\rho)\frac{1}{\gamma}e^{-\gamma t}},\qquad\text{as $x\rightarrow\infty$}.
\end{equation}

(iii) (super-exponential delay) Assume that $\bar{L}(t)=e^{-\gamma t^{2}}$ where $\gamma>0$. Then
\begin{equation}
\mathbb{P}(\tau_{t}<\infty|X_{t}=x)\sim e^{-(\frac{\lambda}{\rho}-\nu)x
+(\lambda-\nu\rho)\sqrt{\frac{\pi}{\gamma}}(1-N(\sqrt{2\gamma}t))},\qquad\text{as $x\rightarrow\infty$},
\end{equation} 
where $N(\cdot)$ is the cumulative distribution function of a standard normal random variable
with mean $0$ and variance $1$.
\end{remark}

Next, let us obtain
an asymptotic estimate for the probability density function of $\tau_{t}$
for large initial surplus $x$ limit. Before we proceed, let us 
point out that the probability density function for the ruin time for the dual risk model (without delays)
was first obtained by Prabhu \cite{Prabhu}.

\begin{theorem}\label{ThmII}
Assume that $\int_{0}^{\infty}\bar{L}(t)dt<\infty$.
We have the following estimate for the Laplace transform of the ruin time $\tau_{t}$, as $x\rightarrow\infty$,
\begin{equation}
\mathbb{E}_{X_{t}=x}\left[e^{-\theta\tau_{t}}\right]\sim
e^{-\beta x}e^{(\beta\rho-\theta)\int_{t}^{\infty}\bar{L}(s)ds-\theta t},
\end{equation}
where $\beta>0$ is the unique positive value that satisfies the equation:
\begin{equation}\label{betaEqn}
\rho\beta+\lambda\int_{0}^{\infty}[e^{-\beta y}-1]p(y)dy-\theta=0.
\end{equation}
As a result, we have the following asymptotic result for the probability density function of the ruin time $\tau_{t}$
at any finite time $T>\frac{x}{\rho}+t$ for $x\rightarrow\infty$:
\begin{align}
f(T;t,x)&\sim
\left(x-\rho\int_{t}^{\infty}\bar{L}(s)ds\right)\sum_{n=1}^{\infty}\frac{(\lambda/\rho)^{n}}{n!}
\left(\rho(T-t)-\rho\int_{t}^{\infty}\bar{L}(s)ds\right)^{n-1}
\\
&\qquad\qquad\qquad\cdot
e^{-\lambda(T-t)+\lambda\int_{t}^{\infty}\bar{L}(s)ds}
p^{\ast n}(\rho(T-t)-x),
\nonumber
\end{align}
where $p^{\ast n}(\cdot)$ is the probability density function
of $Y_{1}+Y_{2}+\cdots+Y_{n}$.
\end{theorem}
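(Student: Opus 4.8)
The plan is to mirror the proof of Theorem~\ref{ThmI}: build a test function adapted to the discounted problem, run It\^{o}/Dynkin up to $\tau_{t}$, and then read off the density by recognizing the resulting Laplace transform as that of a shifted non-delayed dual ruin time. First I would note, exactly as for \eqref{alphaEqn}, that $\beta\mapsto\rho\beta+\lambda\int_{0}^{\infty}[e^{-\beta y}-1]p(y)\,dy-\theta$ is convex, equals $-\theta<0$ at $\beta=0$, and tends to $+\infty$, so \eqref{betaEqn} has a unique positive root $\beta$; moreover $\beta\rho-\theta=\lambda\int_{0}^{\infty}[1-e^{-\beta y}]p(y)\,dy>0$. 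By the Feynman--Kac reasoning behind \eqref{survivalEqn}, $g(x,t):=\mathbb{E}_{X_{t}=x}[e^{-\theta\tau_{t}}]$ solves the same integro-differential equation \eqref{survivalEqn}, now with boundary value $g(0,t)=e^{-\theta t}$ (discounting to a stopping time is absorbed into the boundary datum). The natural test function is
\[
w(x,t)=e^{-\beta x}\,e^{(\beta\rho-\theta)\int_{t}^{\infty}\bar{L}(s)\,ds-\theta t}.
\]
Plugging $w$ into $\partial_{t}w-\rho\partial_{x}w+\lambda L(t)\int_{0}^{\infty}[w(x+y,t)-w(x,t)]p(y)\,dy$, using \eqref{betaEqn} to replace $\lambda\int_{0}^{\infty}[e^{-\beta y}-1]p(y)\,dy$ by $\theta-\rho\beta$ and the identity $L(t)+\bar{L}(t)=1$, all terms cancel: unlike in Theorem~\ref{ThmI}, $w$ solves \eqref{survivalEqn} \emph{exactly}. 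Since $\int_{0}^{\infty}\bar{L}(s)\,ds<\infty$, $w$ is bounded and lies in $C^{1,1}_{b}$ as in Theorem~\ref{ThmI}.

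Applying It\^{o}'s formula exactly as in the proof of Theorem~\ref{ThmI}, but now with a vanishing remainder, gives
\[
w(x,t)=\mathbb{E}_{X_{t}=x}[w(X_{\tau_{t}},\tau_{t})]=\mathbb{E}_{X_{t}=x}\left[e^{-\theta\tau_{t}}e^{(\beta\rho-\theta)\int_{\tau_{t}}^{\infty}\bar{L}(s)\,ds}1_{\tau_{t}<\infty}\right],
\]
using $X_{\tau_{t}}=0$ on $\{\tau_{t}<\infty\}$ and that the integrand vanishes on $\{\tau_{t}=\infty\}$. Since $\tau_{t}\ge\frac{x}{\rho}+t$ a.s. and $\int_{0}^{\infty}\bar{L}(s)\,ds<\infty$, on $\{\tau_{t}<\infty\}$ one has $0\le\int_{\tau_{t}}^{\infty}\bar{L}(s)\,ds\le\int_{\frac{x}{\rho}+t}^{\infty}\bar{L}(s)\,ds=:\varepsilon(x)$ with $\varepsilon(x)\to0$; combined with $\beta\rho-\theta>0$ this yields the two-sided bound
\[
\mathbb{E}_{X_{t}=x}[e^{-\theta\tau_{t}}]\le e^{-\beta x}e^{(\beta\rho-\theta)\int_{t}^{\infty}\bar{L}(s)\,ds-\theta t}\le e^{(\beta\rho-\theta)\varepsilon(x)}\,\mathbb{E}_{X_{t}=x}[e^{-\theta\tau_{t}}],
\]
so letting $x\to\infty$ proves the Laplace-transform asymptotics (indeed with explicit bounds as a bonus).

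For the density, set $A:=\int_{t}^{\infty}\bar{L}(s)\,ds$ and use \eqref{betaEqn} to rewrite the asymptotic Laplace transform as $e^{-\beta(x-\rho A)}e^{-\theta(t+A)}=\mathbb{E}\big[e^{-\theta(\tau^{\mathrm{std}}+t+A)}\big]$, where $\tau^{\mathrm{std}}$ is the ruin time of the non-delayed dual model \eqref{DualModel} started from surplus $x-\rho A$; here I use the elementary identity $\mathbb{E}[e^{-\theta\tau^{\mathrm{std}}}\mid X_{0}=z]=e^{-\beta z}$ (verification lemma with the generator of \eqref{DualModel}, whose $\theta=0$ case is \eqref{alphaEqn}). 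Thus, as $x\to\infty$, $\tau_{t}$ is asymptotically distributed as $\tau^{\mathrm{std}}+t+A$, and I would quote Prabhu's \cite{Prabhu} density for the non-delayed dual ruin time, $f^{\mathrm{std}}(s;z)=\frac{z}{s}\sum_{n\ge1}e^{-\lambda s}\frac{(\lambda s)^{n}}{n!}p^{\ast n}(\rho s-z)$ for $s>z/\rho$ (a Kendall-identity / cycle-lemma statement for the first passage of $\rho s-\sum_{i\le N_{s}}Y_{i}$ to level $z$). Substituting $s=T-t-A$ and $z=x-\rho A$ — so that $\rho s-z=\rho(T-t)-x$, $e^{-\lambda s}=e^{-\lambda(T-t)+\lambda A}$, and $s>z/\rho$ is exactly the condition $T>\frac{x}{\rho}+t$, while $\int_{t}^{T}\bar{L}\to A$ since $T\ge\frac{x}{\rho}+t\to\infty$ — and rewriting the prefactors in terms of $(\lambda/\rho)^{n}$ and $(\rho s)^{n-1}$ produces the stated formula for $f(T;t,x)$.

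The Laplace-transform part is routine once $w$ is identified; the delicate point is the last step, namely upgrading ``the Laplace transforms agree asymptotically'' to ``the densities agree asymptotically''. Rigorously one should either (i) run a Tauberian/inversion argument, exploiting the \emph{exact} identity of the second display together with the fact that the twist $e^{(\beta\rho-\theta)\int_{\tau_{t}}^{\infty}\bar{L}}$ tends to $1$ on $\{\tau_{t}<\infty\}$ uniformly in the regime where $\rho(T-t)-x$ stays bounded, so as to control the inversion error; or (ii) give a direct path argument, conditioning on the realized-profit epochs in $(t,T]$ — a non-homogeneous Poisson process of intensity $\lambda L(\cdot)$ by the key observation preceding \eqref{survivalEqn} — and invoking a cycle lemma for the event that the downward-drifting, upward-jumping surplus first hits $0$ precisely at $T$. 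Route (ii) is more self-contained but technically fiddlier because of the time-inhomogeneity of the jump rate, which is exactly why passing through the Laplace transform — where the inhomogeneity enters only through the scalar $A$ — is the cleaner path; along the way one should pin down the precise regime of validity of the ``$\sim$'' (essentially $x\to\infty$ with $\rho(T-t)-x$ bounded, forcing $T\to\infty$ and $\int_{t}^{T}\bar{L}\to A$).
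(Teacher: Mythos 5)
Your treatment of the Laplace transform is essentially the paper's own proof: the same test function $w(x,t)=e^{-\beta x-\theta t-c\int_{t}^{\infty}\bar{L}(s)ds}$ with $c=\theta-\beta\rho$, the same verification that $\mathcal{L}_{t}w=0$ exactly, the same It\^{o}/optional-stopping identity, and the same sandwich coming from $\tau_{t}\geq t+\frac{x}{\rho}$ and $\beta\rho-\theta\geq 0$ (the paper additionally records $\beta\rho-\theta\leq\lambda$ so that the correction factor tends to $1$ uniformly in $\theta$, which you should keep if you want uniformity). Where you genuinely diverge is the density step. The paper expands $e^{-\beta(x-\rho\int_{t}^{\infty}\bar{L})}$ via Lagrange's implicit function theorem (the De Vylder--Goovaerts/Lin--Willmot series \eqref{betaExpression}) and inverts the Laplace transform term by term using Dirac deltas; you instead observe that the limiting transform is exactly $\mathbb{E}[e^{-\theta(\tau^{\mathrm{std}}+t+A)}]$ for the non-delayed model started at $x-\rho A$ and quote Prabhu's (Kendall-identity) density. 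Your route is cleaner and avoids re-deriving the series, at the cost of importing \cite{Prabhu} as a black box; the paper's route is self-contained modulo \cite{Lin}. Both routes share the same unaddressed gap you correctly flag: neither actually justifies passing from asymptotic agreement of Laplace transforms to asymptotic agreement of densities (the paper simply inverts the limit expression and declares the result to be the asymptotic density), so on this point your proposal is no weaker than the paper.

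One concrete discrepancy you should not paper over: your substitution does \emph{not} ``produce the stated formula'' except when $\rho=1$. Kendall/Prabhu gives the coefficient $\frac{z}{s}e^{-\lambda s}\frac{(\lambda s)^{n}}{n!}=z\frac{\lambda^{n}s^{n-1}}{n!}e^{-\lambda s}$, whereas the theorem's coefficient is $z\frac{(\lambda/\rho)^{n}}{n!}(\rho s)^{n-1}e^{-\lambda s}=z\frac{\lambda^{n}s^{n-1}}{\rho\,n!}e^{-\lambda s}$; the two differ by a factor of $\rho$. Tracing the paper's computation in \eqref{Similar}, the step $\int_{0}^{\infty}g(y)\,\delta\bigl(T-\tfrac{x+y}{\rho}-t\bigr)\,dP^{\ast n}(y)=g(\rho(T-t)-x)\,p^{\ast n}(\rho(T-t)-x)$ omits the Jacobian $\rho$ from the change of variables in the delta function, so your Kendall-based answer is the correct normalization and the displayed series in the theorem is off by $1/\rho$. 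You should state this explicitly rather than asserting exact agreement.
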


\begin{proof}
Define the infinitesimal generator as
\begin{equation}
\mathcal{L}_{t}w
=\frac{\partial w}{\partial t}
-\rho\frac{\partial w}{\partial x}
+\lambda L(t)\int_{0}^{\infty}[w(x+y,t)-w(x,t)]p(y)dy.
\end{equation}
Take $w(x,t)=e^{-\beta x-\theta t-c\int_{t}^{\infty}\bar{L}(s)ds}$,
where
\begin{equation}
c=-\lambda\int_{0}^{\infty}[1-e^{-\beta y}]p(y)dy
=\theta-\beta\rho.
\end{equation}
Then, we can compute that
\begin{align*}
\mathcal{L}_{t}w
&=(-\theta+c\bar{L}(t))w
+\beta\rho w+\lambda L(t)\int_{0}^{\infty}[e^{-\beta y}-1]p(y)dy\cdot w
\\
&=-cw+c\bar{L}(t)w+cL(t)w=0.
\end{align*}
By It\^{o}'s formula,
\begin{equation}
\mathbb{E}_{X_{t}=x}[w(X_{\tau_{t}},t)]
=w(x,t).
\end{equation}
Note that when $\tau_{t}<\infty$, $X_{\tau_{t}}=0$. 
Therefore, we get
\begin{equation}
\mathbb{E}_{X_{t}=x}\left[e^{-\theta\tau_{t}-c\int_{\tau_{t}}^{\infty}\bar{L}(s)ds}\right]
=e^{-\beta x-\theta t-c\int_{t}^{\infty}\bar{L}(s)ds}.
\end{equation}
In other words,
\begin{equation}
\mathbb{E}_{X_{t}=x}\left[e^{-\theta(\tau_{t}-t)+(\beta\rho-\theta)\int_{\tau_{t}}^{\infty}\bar{L}(s)ds}\right]
=e^{-\beta x+(\beta\rho-\theta)\int_{t}^{\infty}\bar{L}(s)ds}.
\end{equation}
Note that $\beta\rho-\theta=-c\geq 0$, and 
$\tau_{t}\geq t+\frac{x}{\rho}$, 
hence, we conclude that
\begin{equation}
e^{-\beta x+(\beta\rho-\theta)\int_{t}^{\infty}\bar{L}(s)ds-(\beta\rho-\theta)\int_{t+\frac{x}{\rho}}^{\infty}\bar{L}(s)ds}
\leq
\mathbb{E}_{X_{t}=x}\left[e^{-\theta(\tau_{t}-t)}\right]
\leq e^{-\beta x+(\beta\rho-\theta)\int_{t}^{\infty}\bar{L}(s)ds}.
\end{equation}

Notice that $\beta\rho-\theta=\lambda\int_{0}^{\infty}[1-e^{-\beta y}]p(y)dy\leq\lambda$
and thus
\begin{equation}
e^{(\beta\rho-\theta)\int_{\frac{x}{\rho}+t}^{\infty}\bar{L}(s)ds}
\leq e^{\lambda\int_{\frac{x}{\rho}+t}^{\infty}\bar{L}(s)ds}
\rightarrow 1,
\end{equation}
as $x\rightarrow\infty$, uniformly in $\theta>0$.

Therefore, as $x\rightarrow\infty$, 
\begin{equation}
\mathbb{E}_{X_{t}=x}\left[e^{-\theta\tau_{t}}\right]
=e^{-\beta x}e^{(\beta\rho-\theta)\int_{t}^{\infty}\bar{L}(s)ds-\theta t}(1+o(1)),
\end{equation}
where $o(1)$ is uniform in $\theta>0$.

It is well known in the actuarial literatures that 
the variable $\beta$, that is defined in \eqref{betaEqn}, 
has the following explicit but quite complicated expression in terms of $\theta$, see e.g. \cite{DeVylder}
and \cite{Lin}:
\begin{equation}\label{betaExpression}
\beta=\frac{\theta+\lambda}{\rho}-\sum_{n=1}^{\infty}\frac{(\lambda/\rho)^{n}}{n!}
\int_{0}^{\infty}y^{n-1}e^{-\frac{\lambda+\theta}{\rho}y}dP^{\ast n}(y),
\end{equation}
where $P^{\ast n}(y)=\mathbb{P}(Y_{1}+Y_{2}+\cdots+Y_{n}\leq y)$. 

Indeed, we can also get an expression for a function of $\beta$, 
say $F(\beta)$ where $F(\cdot)$ is an analytic function. 
By Lagrange's implicit function theorem, see e.g. Goulden and Jackson \cite{Goulden},
\footnote{See formula (1.5) in \cite{Lin}. But note that in (1.5) in \cite{Lin} there was a typo
and $h(\frac{\delta+\lambda}{c})c$ in (1.5) should be $h(\frac{\delta+\lambda}{c})$ in \cite{Lin}.}
for any analytic function $F(z)$:
\begin{equation}
F(\beta)=F\left(\frac{\theta+\lambda}{\rho}\right)
+\sum_{n=1}^{\infty}(-1)^{n}\frac{(\lambda/\rho)^{n}}{n!}\frac{d^{n-1}}{dz^{n-1}}
\left\{F'(z)\int_{0}^{\infty}e^{-zy}dP^{\ast n}(y)\right\}\bigg|_{z=\frac{\theta+\lambda}{\rho}}.
\end{equation}
In our case, we can take $F(\beta)=e^{-\beta(x-\rho\int_{t}^{\infty}\bar{L}(s)ds)}$ and we get
\begin{align}
&e^{-\beta(x-\rho\int_{t}^{\infty}\bar{L}(s)ds)}
\\
&=e^{-\frac{\theta+\lambda}{\rho}(x-\rho\int_{t}^{\infty}\bar{L}(s)ds)}
\nonumber
\\
\nonumber
&\qquad
+\sum_{n=1}^{\infty}(-1)^{n}\frac{(\lambda/\rho)^{n}}{n!}\frac{d^{n-1}}{dz^{n-1}}
\bigg\{-\left(x-\rho\int_{t}^{\infty}\bar{L}(s)ds\right)
\nonumber
\\
&\qquad\qquad\qquad\cdot
\int_{0}^{\infty}
e^{-z(x-\rho\int_{t}^{\infty}\bar{L}(s)ds+y)}dP^{\ast n}(y)\bigg\}\bigg|_{z=\frac{\theta+\lambda}{\rho}}
\nonumber
\\
&=e^{-\frac{\theta+\lambda}{\rho}(x-\rho\int_{t}^{\infty}\bar{L}(s)ds)}
\nonumber
\\
&\qquad
+\left(x-\rho\int_{t}^{\infty}\bar{L}(s)ds\right)\sum_{n=1}^{\infty}\frac{(\lambda/\rho)^{n}}{n!}
\int_{0}^{\infty}\left(x-\rho\int_{t}^{\infty}\bar{L}(s)ds+y\right)^{n-1}
\nonumber
\\
&\qquad\qquad\qquad\cdot
e^{-\frac{\theta+\lambda}{\rho}(x-\rho\int_{t}^{\infty}\bar{L}(s)ds+y)}dP^{\ast n}(y).
\nonumber
\end{align}
Applying the inverse Laplace transform
and use the fact that for any $c>0$, $\mathcal{L}^{-1}\{e^{-c\theta}\}=\delta\left(t-c\right)$
and, $\mathcal{L}\{f(t-c)1_{t\geq c}\}=e^{-c\theta}\mathcal{L}\{f(t)\}$, we get that
the probability density function $f(T;t,x)$ of the ruin time $\tau_{t}$
for at time $T$ is given by
\begin{align}\label{Similar}
f(T;t,x)
&=\mathcal{L}^{-1}\left\{e^{-\beta(x-\rho\int_{t}^{\infty}\bar{L}(s)ds)}e^{-\theta\int_{t}^{\infty}\bar{L}(s)ds-\theta t}\right\}
\\
&=e^{-\frac{\lambda x}{\rho}+\lambda\int_{t}^{\infty}\bar{L}(s)ds}\mathcal{L}^{-1}\left\{e^{-\frac{\theta}{\rho}x-\theta t}\right\}
\nonumber
\\
&\qquad
+\left(x-\rho\int_{t}^{\infty}\bar{L}(s)ds\right)\sum_{n=1}^{\infty}\frac{(\lambda/\rho)^{n}}{n!}
\int_{0}^{\infty}\left(x-\rho\int_{t}^{\infty}\bar{L}(s)ds+y\right)^{n-1}
\nonumber
\\
&\qquad\qquad\qquad\cdot
\mathcal{L}^{-1}\left\{e^{-\frac{\theta}{\rho}(x+y+\rho t)-\frac{\lambda}{\rho}(x+y)+\lambda\int_{t}^{\infty}\bar{L}(s)ds}\right\}dP^{\ast n}(y)
\nonumber
\\
&=e^{-\frac{\lambda x}{\rho}+\lambda\int_{t}^{\infty}\bar{L}(s)ds}
\delta\left(T-\frac{x}{\rho}-t\right)
\nonumber
\\
&\qquad
+\left(x-\rho\int_{t}^{\infty}\bar{L}(s)ds\right)\sum_{n=1}^{\infty}\frac{(\lambda/\rho)^{n}}{n!}
\int_{0}^{\infty}\left(x-\rho\int_{t}^{\infty}\bar{L}(s)ds+y\right)^{n-1}
\nonumber
\\
&\qquad\qquad\qquad\cdot
e^{-\frac{\lambda}{\rho}(x+y)+\lambda\int_{t}^{\infty}\bar{L}(s)ds}
\delta\left(T-\frac{x+y}{\rho}-t\right)dP^{\ast n}(y)
\nonumber
\\
&=e^{-\frac{\lambda x}{\rho}+\lambda\int_{t}^{\infty}\bar{L}(s)ds}
\delta\left(T-\frac{x}{\rho}-t\right)
\nonumber
\\
&\qquad
+\left(x-\rho\int_{t}^{\infty}\bar{L}(s)ds\right)\sum_{n=1}^{\infty}\frac{(\lambda/\rho)^{n}}{n!}
\left(\rho(T-t)-\rho\int_{t}^{\infty}\bar{L}(s)ds\right)^{n-1}
\nonumber
\\
&\qquad\qquad\qquad\cdot
e^{-\lambda(T-t)+\lambda\int_{t}^{\infty}\bar{L}(s)ds}
p^{\ast n}(\rho(T-t)-x)\cdot 1_{T>t+\frac{x}{\rho}},
\nonumber
\end{align}
where $\delta(\cdot)$ is the Dirac delta function.
\end{proof}

Next, let us give some examples
for which the asymptotics in Theorem \ref{ThmII} becomes explicit.
We first consider the asymptotics of Laplace transform in Theorem \ref{ThmII}.

\begin{remark}
Let us assume that $p(y)=\nu e^{-\nu y}$ for some $\gamma>0$.
Then from \eqref{alphaEqn}, we get 
$\beta=\frac{-(\rho\nu-\lambda-\theta)+\sqrt{(\rho\nu-\lambda-\theta)^{2}+4\rho\nu\theta}}{2\rho}$.

(i) (sub-exponential delay) Assume that $\bar{L}(t)=\frac{1}{(1+t)^{\gamma}}$ where $\gamma>1$. Then
\begin{equation}
\mathbb{E}_{X_{t}=x}[e^{-\theta\tau_{t}}]
\sim e^{-\frac{-(\rho\nu-\lambda-\theta)+\sqrt{(\rho\nu-\lambda-\theta)^{2}+4\rho\nu\theta}}{2\rho}x
+\frac{-(\rho\nu-\lambda+\theta)+\sqrt{(\rho\nu-\lambda-\theta)^{2}+4\rho\nu\theta}}{2}\frac{1}{\gamma-1}\frac{1}{(1+t)^{\gamma-1}}},
\end{equation}
as $x\rightarrow\infty$.

(ii) (exponential delay) Assume that $\bar{L}(t)=e^{-\gamma t}$ where $\gamma>0$. Then
\begin{equation}
\mathbb{E}_{X_{t}=x}[e^{-\theta\tau_{t}}]
\sim e^{-\frac{-(\rho\nu-\lambda-\theta)+\sqrt{(\rho\nu-\lambda-\theta)^{2}+4\rho\nu\theta}}{2\rho}x
+\frac{-(\rho\nu-\lambda+\theta)+\sqrt{(\rho\nu-\lambda-\theta)^{2}+4\rho\nu\theta}}{2}\frac{1}{\gamma}e^{-\gamma t}},
\end{equation}
as $x\rightarrow\infty$.

(iii) (super-exponential delay) Assume that $\bar{L}(t)=e^{-\gamma t^{2}}$ where $\gamma>0$. Then
\begin{equation}
\mathbb{E}_{X_{t}=x}[e^{-\theta\tau_{t}}]
\sim e^{-\frac{-(\rho\nu-\lambda-\theta)+\sqrt{(\rho\nu-\lambda-\theta)^{2}+4\rho\nu\theta}}{2\rho}x
+\frac{-(\rho\nu-\lambda+\theta)+\sqrt{(\rho\nu-\lambda-\theta)^{2}+4\rho\nu\theta}}{2}\sqrt{\frac{\pi}{\gamma}}(1-N(\sqrt{2\gamma}t))},
\end{equation} 
as $x\rightarrow\infty$,
where $N(\cdot)$ is the cumulative distribution function of a standard normal random variable
with mean $0$ and variance $1$.
\end{remark}

We next consider the asymptotics of the probability density function in Theorem \ref{ThmII}.

\begin{remark}
Let us assume that $p(y)=\nu e^{-\nu y}$ for some $\gamma>0$.
Then we get $p^{\ast n}(y)=\frac{\nu^{n}y^{n-1}}{(n-1)!}e^{-\nu y}$.

(i) (sub-exponential delay) Assume that $\bar{L}(t)=\frac{1}{(1+t)^{\gamma}}$ where $\gamma>1$. Then
\begin{align}
f(T;t,x)&\sim
\left(x-\frac{\rho}{\gamma-1}\frac{1}{(1+t)^{\gamma-1}}\right)\sum_{n=1}^{\infty}\frac{(\lambda/\rho)^{n}}{n!}
\left(\rho(T-t)-\frac{\rho}{\gamma-1}\frac{1}{(1+t)^{\gamma-1}}\right)^{n-1}
\\
&\qquad\qquad\qquad\cdot
e^{-\lambda(T-t)+\frac{\lambda}{\gamma-1}\frac{1}{(1+t)^{\gamma-1}}}
\frac{\nu^{n}(\rho(T-t)-x)^{n-1}}{(n-1)!}e^{-\nu(\rho(T-t)-x)},
\nonumber
\end{align}
for $T>\frac{x}{\rho}+t$ and $x\rightarrow\infty$.

(ii) (exponential delay) Assume that $\bar{L}(t)=e^{-\gamma t}$ where $\gamma>0$. Then
\begin{align}
f(T;t,x)&\sim
\left(x-\frac{\rho}{\gamma}e^{-\gamma t}\right)\sum_{n=1}^{\infty}\frac{(\lambda/\rho)^{n}}{n!}
\left(\rho(T-t)-\frac{\rho}{\gamma}e^{-\gamma t}\right)^{n-1}
\\
&\qquad\qquad\qquad\cdot
e^{-\lambda(T-t)+\frac{\lambda}{\gamma}e^{-\gamma t}}
\frac{\nu^{n}(\rho(T-t)-x)^{n-1}}{(n-1)!}e^{-\nu(\rho(T-t)-x)},
\nonumber
\end{align}
for $T>\frac{x}{\rho}+t$ and $x\rightarrow\infty$.

(iii) (super-exponential delay) Assume that $\bar{L}(t)=e^{-\gamma t^{2}}$ where $\gamma>0$. Then
\begin{align}
f(T;t,x)&\sim
\left(x-\rho\sqrt{\frac{\pi}{\gamma}}(1-N(\sqrt{2\gamma}t))\right)\sum_{n=1}^{\infty}\frac{(\lambda/\rho)^{n}}{n!}
\left(\rho(T-t)-\rho\sqrt{\frac{\pi}{\gamma}}(1-N(\sqrt{2\gamma}t))\right)^{n-1}
\\
&\qquad\qquad\qquad\cdot
e^{-\lambda(T-t)+\lambda\sqrt{\frac{\pi}{\gamma}}(1-N(\sqrt{2\gamma}t))}
\frac{\nu^{n}(\rho(T-t)-x)^{n-1}}{(n-1)!}e^{-\nu(\rho(T-t)-x)},
\nonumber
\end{align}
for $T>\frac{x}{\rho}+t$ and $x\rightarrow\infty$,
where $N(\cdot)$ is the cumulative distribution function of a standard normal random variable
with mean $0$ and variance $1$.
\end{remark}


$\tau_{t}$ can take the value $\infty$ when the net condition holds
and thus $\mathbb{E}_{X_{t}=x}[\tau_{t}]$ is infinity. On the contrary, 
when we assume $\lambda\mathbb{E}[Y_{1}]<\rho$, that is, the opposite
of the net condition holds, $\mathbb{E}_{X_{t}=x}[\tau_{t}]$ is finite and we 
are interested to estimate:
\begin{equation}
\mathbb{E}_{X_{t}=x}[\tau_{t}],
\qquad
\text{as $x\rightarrow\infty$},
\end{equation}

Assume $\lambda\mathbb{E}[Y_{1}]<\rho$ and let
\begin{equation}
v(x,t):=\mathbb{E}_{X_{t}=x}[\tau_{t}-t].
\end{equation}
Then, $v(x,t)$ satisfies the equation:
\begin{equation}
\frac{\partial v}{\partial t}
-\rho\frac{\partial v}{\partial x}
+\lambda L(t)\int_{0}^{\infty}[v(x+y,t)-v(x,t)]p(y)dy
+1=0,
\end{equation}
with the boundary condition $v(0,t)\equiv 0$.

We have the following asymptotic results for the large initial surplus.

\begin{theorem}\label{ThmIII}
Let us assume that $\lambda\mathbb{E}[Y_{1}]<\rho$. Then,
\begin{align}
&t+\frac{x}{\rho-\lambda\mathbb{E}[Y_{1}]}
-\frac{\lambda\mathbb{E}[Y_{1}]}{\rho-\lambda\mathbb{E}[Y_{1}]}\int_{t}^{\infty}\bar{L}(s)ds
\\
&\qquad
\leq
\mathbb{E}_{X_{t}=x}[\tau_{t}]\leq
t+\frac{x}{\rho-\lambda\mathbb{E}[Y_{1}]}
-\frac{\lambda\mathbb{E}[Y_{1}]}{\rho-\lambda\mathbb{E}[Y_{1}]}\int_{t}^{\infty}\bar{L}(s)ds
+\frac{\lambda\mathbb{E}[Y_{1}]}{\rho-\lambda\mathbb{E}[Y_{1}]}\int_{t+\frac{x}{\rho}}^{\infty}\bar{L}(s)ds.
\nonumber
\end{align}
Hence, as $x\rightarrow\infty$, we have
\begin{equation}
\mathbb{E}_{X_{t}=x}[\tau_{t}]\sim t+\frac{x}{\rho-\lambda\mathbb{E}[Y_{1}]}
-\frac{\lambda\mathbb{E}[Y_{1}]}{\rho-\lambda\mathbb{E}[Y_{1}]}\int_{t}^{\infty}\bar{L}(s)ds.
\end{equation}
\end{theorem}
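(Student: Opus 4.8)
The plan is to follow the strategy of the proof of Theorem \ref{ThmI}, but with an affine test function in place of the exponential one. Write $m:=\mathbb{E}[Y_{1}]$, recall the generator $\mathcal{L}_{t}$ from the proof of Theorem \ref{ThmII}, and introduce the candidate
\begin{equation*}
w(x,t):=\frac{x}{\rho-\lambda m}-\frac{\lambda m}{\rho-\lambda m}\int_{t}^{\infty}\bar{L}(s)ds .
\end{equation*}
A direct computation gives $\partial_{t}w=\tfrac{\lambda m}{\rho-\lambda m}\bar{L}(t)$, $-\rho\,\partial_{x}w=-\tfrac{\rho}{\rho-\lambda m}$, and $\int_{0}^{\infty}[w(x+y,t)-w(x,t)]p(y)dy=\tfrac{m}{\rho-\lambda m}$, so that $\mathcal{L}_{t}w=\tfrac{\lambda m}{\rho-\lambda m}\bar{L}(t)-\tfrac{\rho}{\rho-\lambda m}+\tfrac{\lambda m}{\rho-\lambda m}L(t)=\tfrac{\lambda m-\rho}{\rho-\lambda m}=-1$. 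Thus $w$ satisfies the same integro-differential equation as $v(x,t)=\mathbb{E}_{X_{t}=x}[\tau_{t}-t]$, but with boundary value $w(0,t)=-\tfrac{\lambda m}{\rho-\lambda m}\int_{t}^{\infty}\bar{L}(s)ds$ instead of $0$; the affine part is exactly the predicted answer and the boundary discrepancy will produce the error term.

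Next I would apply It\^{o}'s (Dynkin's) formula to $w(X_{s},s)$ between $s=t$ and $s=\tau_{t}$. Since $w$ is affine in $x$ and hence unbounded, this should be done by first stopping at $\tau_{t}\wedge n$ and then letting $n\to\infty$. Under the assumption $\lambda m<\rho$ the surplus has negative asymptotic drift (as $L(s)\to1$), so $\tau_{t}<\infty$ a.s.\ and, as recorded in the text, $\mathbb{E}_{X_{t}=x}[\tau_{t}]<\infty$; moreover $X_{s}\leq x+J_{s}$ with $\mathbb{E}_{X_{t}=x}[J_{\tau_{t}}]\leq\lambda m\,\mathbb{E}_{X_{t}=x}[\tau_{t}]<\infty$ by a Wald-type (optional stopping) argument, so $\mathbb{E}_{X_{t}=x}[\sup_{t\leq s\leq\tau_{t}}|w(X_{s},s)|]<\infty$ and the compensated jump martingale stopped at $\tau_{t}\wedge n$ has zero expectation in the limit. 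Using $\mathcal{L}_{s}w\equiv-1$ and $X_{\tau_{t}}=0$, one obtains $\mathbb{E}_{X_{t}=x}[w(X_{\tau_{t}},\tau_{t})]=w(x,t)-\mathbb{E}_{X_{t}=x}[\tau_{t}-t]$, i.e.
\begin{equation*}
\mathbb{E}_{X_{t}=x}[\tau_{t}]=t+w(x,t)+\frac{\lambda m}{\rho-\lambda m}\,\mathbb{E}_{X_{t}=x}\!\left[\int_{\tau_{t}}^{\infty}\bar{L}(s)ds\right].
\end{equation*}

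From this identity the two bounds follow immediately. Since $\bar{L}\geq0$, the last expectation is nonnegative, which gives the lower bound $\mathbb{E}_{X_{t}=x}[\tau_{t}]\geq t+w(x,t)$. For the upper bound, between innovation epochs $X$ decreases at rate $\rho$ and all jumps are upward, hence $\tau_{t}\geq t+\tfrac{x}{\rho}$ a.s., so $\int_{\tau_{t}}^{\infty}\bar{L}(s)ds\leq\int_{t+x/\rho}^{\infty}\bar{L}(s)ds$ deterministically, and the upper bound follows. The asymptotic statement is then immediate because $\int_{t+x/\rho}^{\infty}\bar{L}(s)ds\to0$ as $x\to\infty$ (using $\int_{0}^{\infty}\bar{L}(s)ds<\infty$), so the gap between the two bounds vanishes.

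The main obstacle I anticipate is the justification of Dynkin's formula with an unbounded test function: one must verify that the stopped family $w(X_{\tau_{t}\wedge n},\tau_{t}\wedge n)$ is uniformly integrable, so that $\mathbb{E}_{X_{t}=x}[w(X_{\tau_{t}\wedge n},\tau_{t}\wedge n)]\to\mathbb{E}_{X_{t}=x}[w(X_{\tau_{t}},\tau_{t})]$, and that the martingale term contributes nothing in the limit; both reduce to the integrability estimates $\mathbb{E}_{X_{t}=x}[\tau_{t}]<\infty$ and $\mathbb{E}_{X_{t}=x}[\sup_{t\leq s\leq\tau_{t}}X_{s}]<\infty$ noted above. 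Everything else—the verification $\mathcal{L}_{t}w\equiv-1$ and the final bookkeeping—is routine.
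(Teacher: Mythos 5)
Your proposal is correct and follows the same core strategy as the paper: the same affine test function $w$, the same verification $\mathcal{L}_{t}w=-1$, the same exact identity $\mathbb{E}_{X_{t}=x}[\tau_{t}]=t+w(x,t)+\frac{\lambda \mathbb{E}[Y_{1}]}{\rho-\lambda \mathbb{E}[Y_{1}]}\mathbb{E}_{X_{t}=x}[\int_{\tau_{t}}^{\infty}\bar{L}(s)ds]$, and the same use of $\tau_{t}\geq t+\frac{x}{\rho}$ to squeeze the error term. The only real divergence is in how the optional stopping step is justified. You localize in time at $\tau_{t}\wedge n$ and control the boundary term by uniform integrability, via $\mathbb{E}_{X_{t}=x}[\sup_{t\leq s\leq\tau_{t}}X_{s}]<\infty$ obtained from a Wald bound; this requires $\mathbb{E}_{X_{t}=x}[\tau_{t}]<\infty$ as an a priori input, which you take from the text (the paper does assert it before the theorem, but without proof). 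The paper instead localizes in space at $\tau_{t}\wedge\tau_{t}^{K}$, where $\tau_{t}^{K}$ is the first passage above $K$, and kills the boundary contribution $\mathbb{E}[X_{\tau_{t}\wedge\tau_{t}^{K}}]$ as $K\to\infty$ by computing $\mathbb{P}(\tau_{t}^{K}<\tau_{t})$ with the auxiliary exponential martingale $e^{\alpha X}$ (with $-\rho\alpha+\lambda\int_{0}^{\infty}[e^{\alpha y}-1]p(y)dy=0$), then passes to the limit by monotone convergence. The paper's route has the advantage that the finiteness of $\mathbb{E}_{X_{t}=x}[\tau_{t}]$ drops out of the bounded-in-$K$ identity rather than being assumed; your route is shorter and avoids the auxiliary martingale, at the cost of importing that finiteness. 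Both are legitimate; if you want your argument to be self-contained you should either prove $\mathbb{E}_{X_{t}=x}[\tau_{t}]<\infty$ (e.g.\ by comparison with the undelayed model) or adopt the spatial localization.
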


\begin{proof}
Let us define:
\begin{equation}
w(x,t):=\frac{x}{\rho-\lambda\mathbb{E}[Y_{1}]}
-\frac{\lambda\mathbb{E}[Y_{1}]}{\rho-\lambda\mathbb{E}[Y_{1}]}\int_{t}^{\infty}\bar{L}(s)ds.
\end{equation}
Then, we can compute that
\begin{align}
&\frac{\partial w}{\partial t}
-\rho\frac{\partial w}{\partial x}
+\lambda L(t)\int_{0}^{\infty}[w(x+y,t)-w(x,t)]p(y)dy+1
\\
&=\frac{\lambda\mathbb{E}[Y_{1}]}{\rho-\lambda\mathbb{E}[Y_{1}]}\bar{L}(t)-\bar{L}(t)\frac{\lambda\mathbb{E}[Y_{1}]}{\rho-\lambda\mathbb{E}[Y_{1}]}=0.
\nonumber
\end{align}
Let $\tau_{t}^{K}$ be the first time after $t$ the process $X_{t}$ hits above $K>x$. 
Therefore, by It\^{o}'s formula and optional stopping theorem, we have
\begin{align}
\mathbb{E}_{X_{t}=x}\left[w\left(X_{\tau_{t}\wedge\tau_{t}^{K}},\tau_{t}\wedge\tau_{t}^{K}\right)\right]
&=w(x,t)-\mathbb{E}_{X_{t}=x}[\tau_{t}\wedge\tau_{t}^{K}-t]
\\
&=\frac{x}{\rho-\lambda\mathbb{E}[Y_{1}]}
-\frac{\lambda\mathbb{E}[Y_{1}]}{\rho-\lambda\mathbb{E}[Y_{1}]}\int_{t}^{\infty}\bar{L}(s)ds
-\mathbb{E}_{X_{t}=x}[\tau_{t}\wedge\tau_{t}^{K}-t].
\nonumber
\end{align}
On the other hand,
\begin{equation}
\mathbb{E}_{X_{t}=x}\left[w\left(X_{\tau_{t}\wedge\tau_{t}^{K}},\tau_{t}\wedge\tau_{t}^{K}\right)\right]
=
\frac{\mathbb{E}\left[X_{\tau_{t}\wedge\tau_{t}^{K}}\right]}{\rho-\lambda\mathbb{E}[Y_{1}]}
-\frac{\lambda\mathbb{E}[Y_{1}]}{\rho-\lambda\mathbb{E}[Y_{1}]}
\mathbb{E}_{X_{t}=x}\left[\int_{\tau_{t}\wedge\tau_{t}^{K}}^{\infty}\bar{L}(s)ds\right].
\end{equation}
It is easy to see that
\begin{equation}\label{subI}
\mathbb{E}_{X_{t}=x}\left[X_{\tau_{t}\wedge\tau_{t}^{K}}\right]
=\mathbb{E}_{X_{t}=x}\left[X_{\tau_{t}^{K}}|\tau_{t}^{K}<\tau_{t}\right]\mathbb{P}(\tau_{t}^{K}<\tau_{t}|X_{t}=x).
\end{equation}
On the other hand, there exists $\alpha>0$ such that
\begin{equation}
-\rho\alpha+\lambda\int_{0}^{\infty}[e^{\alpha y}-1]p(y)dy=0,
\end{equation}
and therefore
\begin{equation}
\mathbb{E}_{X_{t}=x}\left[e^{\alpha X_{\tau_{t}\wedge\tau_{t}^{K}}}\right]
=e^{\alpha x}=1-\mathbb{P}(\tau_{t}^{K}<\tau_{t})+\mathbb{E}_{X_{t}=x}\left[e^{\alpha X_{\tau_{t}^{K}}}|\tau_{t}^{K}<\tau_{t}\right]
\mathbb{P}(\tau_{t}^{K}<\tau_{t}|X_{t}=x),
\end{equation}
which implies that
\begin{equation}\label{subII}
\mathbb{P}(\tau_{t}^{K}<\tau_{t}|X_{t}=x)=\frac{e^{\alpha x}-1}{\mathbb{E}_{X_{t}=x}\left[e^{\alpha X_{\tau_{t}^{K}}}|\tau_{t}^{K}<\tau_{t}\right]-1}.
\end{equation}
Substituting \eqref{subII} into \eqref{subI}, we get
\begin{equation}
\mathbb{E}_{X_{t}=x}\left[X_{\tau_{t}\wedge\tau_{t}^{K}}\right]
=\mathbb{E}_{X_{t}=x}\left[X_{\tau_{t}^{K}}|\tau_{t}^{K}<\tau_{t}\right]
\frac{e^{\alpha x}-1}{\mathbb{E}_{X_{t}=x}\left[e^{\alpha X_{\tau_{t}^{K}}}|\tau_{t}^{K}<\tau_{t}\right]-1}.
\end{equation}
Since $\alpha>0$ and $X_{\tau_{t}^{K}}\geq K$, we have
\begin{equation}
\mathbb{E}_{X_{t}=x}\left[X_{\tau_{t}\wedge\tau_{t}^{K}}\right]
\rightarrow 0,
\qquad\text{as $K\rightarrow\infty$}.
\end{equation}
Finally by monotone convergence theorem,
\begin{equation}
\lim_{K\rightarrow\infty}\mathbb{E}_{X_{t}=x}[\tau_{t}\wedge\tau_{t}^{K}-t]
=\mathbb{E}_{X_{t}=x}[\tau_{t}-t],
\qquad
\lim_{K\rightarrow\infty}\mathbb{E}_{X_{t}=x}\left[\int_{\tau_{t}\wedge\tau_{t}^{K}}^{\infty}\bar{L}(s)ds\right]
=\mathbb{E}_{X_{t}=x}\left[\int_{\tau_{t}}^{\infty}\bar{L}(s)ds\right].
\end{equation}
Therefore,
\begin{equation}
\mathbb{E}_{X_{t}=x}[\tau_{t}]=t+
\frac{x}{\rho-\lambda\mathbb{E}[Y_{1}]}
-\frac{\lambda\mathbb{E}[Y_{1}]}{\rho-\lambda\mathbb{E}[Y_{1}]}\int_{t}^{\infty}\bar{L}(s)ds
+\frac{\lambda\mathbb{E}[Y_{1}]}{\rho-\lambda\mathbb{E}[Y_{1}]}\mathbb{E}_{X_{t}=x}\left[\int_{\tau_{t}}^{\infty}\bar{L}(s)ds\right].
\end{equation}
Finally, notice that $\tau_{t}\geq t+\frac{x}{\rho}$. Therefore, the proof is complete.
\end{proof}

Next, let us give some examples
for which the asymptotics in Theorem \ref{ThmIII} becomes explicit.

\begin{remark}
Let us assume that $p(y)=\nu e^{-\nu y}$ for some $\gamma>0$ and $\frac{\lambda}{\nu}<\rho$. 
Then $\mathbb{E}[Y_{1}]=\frac{1}{\nu}$.

(i) (sub-exponential delay) Assume that $\bar{L}(t)=\frac{1}{(1+t)^{\gamma}}$ where $\gamma>1$. Then
\begin{equation}
\mathbb{E}_{X_{t}=x}[\tau_{t}]\sim t+\frac{x}{\rho-\frac{\lambda}{\nu}}
-\frac{\lambda}{\rho\nu-\lambda}\frac{1}{\gamma-1}\frac{1}{(1+t)^{\gamma-1}},
\end{equation}
as $x\rightarrow\infty$.

(ii) (exponential delay) Assume that $\bar{L}(t)=e^{-\gamma t}$ where $\gamma>0$. Then
\begin{equation}
\mathbb{E}_{X_{t}=x}[\tau_{t}]\sim t+\frac{x}{\rho-\frac{\lambda}{\nu}}
-\frac{\lambda}{\rho\nu-\lambda}\frac{1}{\gamma}e^{-\gamma t},
\end{equation}
as $x\rightarrow\infty$.

(iii) (super-exponential delay) Assume that $\bar{L}(t)=e^{-\gamma t^{2}}$ where $\gamma>0$. Then
\begin{equation}
\mathbb{E}_{X_{t}=x}[\tau_{t}]\sim t+\frac{x}{\rho-\frac{\lambda}{\nu}}
-\frac{\lambda}{\rho\nu-\lambda}\sqrt{\frac{\pi}{\gamma}}(1-N(\sqrt{2\gamma}t)),
\end{equation} 
as $x\rightarrow\infty$,
where $N(\cdot)$ is the cumulative distribution function of a standard normal random variable
with mean $0$ and variance $1$.
\end{remark}


\section{Exactly Solvable Models}

We have already studied the large initial surplus asymptotics
for ruin probabilities and ruin times. Despite the difficulty
to get closed-form solution for the equation \eqref{survivalEqn}, 
we will show in this section that for special cases, the ruin probability
and ruin time can be solved in closed-form.

When the delay time is constant, the model is exactly solvable. 
Let us assume that the delay time is $\ell>0$.  Then $\psi(x,t)=\mathbb{P}(\tau_{t}<\infty|X_{t}=x)$
satisfies the equation:
\begin{equation}\label{PDEI}
\frac{\partial\psi}{\partial t}
-\rho\frac{\partial\psi}{\partial x}=0,
\qquad
\text{for any $t<\ell$},
\end{equation}
and
\begin{equation}
\frac{\partial\psi}{\partial t}
-\rho\frac{\partial\psi}{\partial x}
+\lambda\int_{0}^{\infty}[\psi(x+y,t)-\psi(x)]p(y)dy=0,
\qquad
\text{for any $t\geq\ell$},
\end{equation}
with the boundary condition that $\psi(0,t)\equiv 1$.

\begin{theorem}\label{ConstThm}
Assume that delay time is $\ell$, a positive constant. Then,
\begin{equation}
\mathbb{P}(\tau_{t}<\infty|X_{t}=x)
=
\begin{cases}
1 &\text{for $t<\ell$ and $x\leq\rho(\ell-t)$}
\\
e^{-\alpha(x-\rho(\ell-t))} &\text{for $t<\ell$ and $x>\rho(\ell-t)$}
\\
e^{-\alpha x} &\text{for $t\geq\ell$}
\end{cases},
\end{equation}
where $\alpha$ is the unique positive value that satisfies:
\begin{equation}
\alpha\rho+\lambda\int_{0}^{\infty}[e^{-\alpha y}-1]p(y)dy=0.
\end{equation}
\end{theorem}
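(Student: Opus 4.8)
The plan is to argue probabilistically, exploiting the observation—already used to derive \eqref{survivalEqn}—that a constant delay $L_k\equiv\ell$ simply shifts the innovation Poisson process by $\ell$. Concretely, here $L(t)=1_{t\geq\ell}$ and $N_s=\#\{k:T_k\leq s-\ell\}$, so almost surely no profit is realised on $[0,\ell)$, while on $[\ell,\infty)$ the realised profits arrive according to a Poisson process of rate $\lambda$ whose increments (together with the i.i.d.\ jump sizes) are independent of everything measurable before time $\ell$. Two consequences drive the proof: on a time interval contained in $[t,\ell)$ (when $t<\ell$) the surplus decays deterministically, $X_s=x-\rho(s-t)$; and, conditioned on $X_\ell$, the process $(X_{\ell+u})_{u\geq0}$ is exactly a standard (undelayed) dual risk process started from $X_\ell$.

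First I would dispose of the case $t\geq\ell$. Here, conditioned on $X_t=x$, the process $(X_{t+u})_{u\geq0}$ is a standard dual risk process started from $x$, so by the classical formula recalled in the introduction (Avanzi et al.\ \cite{Avanzi}), $\psi(x,t)=\mathbb{P}(\tau<\infty\mid X_0=x)=e^{-\alpha x}$, where $\alpha$ is the unique positive root of \eqref{alphaEqn}. This gives the third line of the claimed formula.

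Next, for $t<\ell$, I would split on the sign of $x-\rho(\ell-t)$. If $x\leq\rho(\ell-t)$, then since no jump can occur on $(t,\ell]$ the path is the deterministic line $s\mapsto x-\rho(s-t)$, which reaches $0$ at time $s^\ast=t+x/\rho\in(t,\ell]$; hence $\tau_t=s^\ast<\infty$ and $\psi(x,t)=1$. If instead $x>\rho(\ell-t)$, then $X_\ell=x-\rho(\ell-t)>0$ deterministically (no jumps, no earlier ruin), and applying the case $t\geq\ell$ at time $\ell$ with initial surplus $X_\ell$ yields $\psi(x,t)=e^{-\alpha(x-\rho(\ell-t))}$. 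This reproduces the first two lines; as a sanity check the formula is continuous across $x=\rho(\ell-t)$ (both sides equal $1$ there) and across $t=\ell$ (both sides equal $e^{-\alpha x}$).

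Finally, I would note that the same function can be checked directly against the equations in the statement. For $t\geq\ell$, $e^{-\alpha x}$ is independent of $t$, and substituting it into the integro-differential equation reduces precisely to \eqref{alphaEqn}; for $t<\ell$, the transport equation \eqref{PDEI} has characteristics along which $x+\rho s$ is constant and $\psi$ is constant, and tracing each characteristic either forward to the terminal time $\ell$ (using the $t\geq\ell$ solution) or to the boundary $x=0$ (using $\psi(0,\cdot)\equiv1$) gives exactly the claimed piecewise form; the boundary condition $\psi(0,t)\equiv1$ is immediate. The only step requiring a little care is the restarting argument at time $\ell$—that conditioning on $X_\ell$ genuinely restarts a standard dual model—but this is just the independent-increments property of the (non-homogeneous) Poisson process, i.e.\ Mirasol's theorem as invoked earlier, specialised to $L(t)=1_{t\geq\ell}$, so no real obstacle arises; the only genuine boundary subtlety is the equality case $x=\rho(\ell-t)$, where ruin occurs precisely at time $\ell$ and is still certain.
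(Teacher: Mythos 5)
Your argument is correct and is essentially the paper's own proof: both rest on the observation that with a constant delay $\ell$ no profit is realised before time $\ell$, so the surplus decays deterministically on $[t,\ell)$ (giving certain ruin when $x\leq\rho(\ell-t)$, and otherwise the value $x-\rho(\ell-t)$ at time $\ell$), after which the process restarts as a standard dual risk model governed by the classical formula $e^{-\alpha x}$. The paper likewise invokes the characteristics of the transport equation \eqref{PDEI} only as a supporting remark, exactly as you do in your final sanity check.
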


\begin{proof}
It is easy to see that
\begin{equation}
\psi(x,t)=e^{-\alpha x},
\qquad
\text{for $t\geq\ell$},
\end{equation}
where $\alpha$ is the unique positive value that satisfies:
\begin{equation}
\alpha\rho+\lambda\int_{0}^{\infty}[e^{-\alpha y}-1]p(y)dy=0.
\end{equation}
For $t<\ell$, from the PDE \eqref{PDEI}, by characteristic method, $\psi(x,t)=F(\rho t+x)$
for some function $F$. Indeed, the surplus process decreases
at the constant rate $\rho$ before time $\ell$ and thus
\begin{equation}
\psi(x,t)=1,
\qquad\text{for $t<\ell$ and $x\leq\rho(\ell-t)$},
\end{equation}
and otherwise right before time $\ell$ the surplus is $x-\rho(\ell-t)$
and then the ruin probability is the same as the infinite horizon ruin probability with constant Poisson rate
$\lambda$ with initial surplus $x-\rho(\ell-t)$ and hence
\begin{equation}
\psi(x,t)=e^{-\alpha(x-\rho(\ell-t))},
\qquad\text{for $t<\ell$ and $x>\rho(\ell-t)$}.
\end{equation}
\end{proof}

Following the same argument as in the proof of Theorem \ref{ConstThm}, 
it is easy to see that for any $\theta>0$,
\begin{equation}
\mathbb{E}[e^{-\theta\tau_{t}}|X_{t}=x]
=
\begin{cases}
e^{-\theta t-\theta\frac{x}{\rho}} &\text{for $t<\ell$ and $x\leq\rho(\ell-t)$}
\\
e^{-\theta t}e^{-\beta(x-\rho(\ell-t))} &\text{for $t<\ell$ and $x>\rho(\ell-t)$}
\\
e^{-\theta t}e^{-\beta x} &\text{for $t\geq\ell$}
\end{cases},
\end{equation}
where $\beta$ is the unique positive value that satisfies:
\begin{equation}
\beta\rho+\lambda\int_{0}^{\infty}[e^{-\beta y}-1]p(y)dy-\theta=0.
\end{equation}
Then, we can use the Laplace transform of the ruin
time to obtain the probability density function
of the ruin time.

\begin{theorem}
Assume that delay time is $\ell$, a positive constant. Then,
the probability density function of the ruin time is given by
\begin{equation}
f(T;t,x)
=\delta\left(T-t-\frac{x}{\rho}\right),
\qquad\text{for $t<\ell$ and $x\leq\rho(\ell-t)$},
\end{equation}
and for $t<\ell$ and $x>\rho(\ell-t)$, 
\begin{align}
f(T;t,x)
&=e^{-\frac{\lambda}{\rho}(x-\rho(\ell-t))}\delta\left(T-t-\frac{(x-\rho(\ell-t))}{\rho}\right)
\\
&\qquad
+(x-\rho(\ell-t))\sum_{n=1}^{\infty}\frac{(\lambda/\rho)^{n}}{n!}
(\rho(T-t))^{n-1}e^{-\lambda(T-t)}
\nonumber
\\
&\qquad\qquad\qquad
\cdot p^{\ast n}(\rho(T-t)-(x-\rho(\ell-t)))\cdot 1_{T\geq t+\frac{(x-\rho(\ell-t))}{\rho}},
\nonumber
\end{align}
and for $t\geq\ell$,
\begin{align}
f(T;t,x)
&=e^{-\frac{\lambda}{\rho}x}\delta\left(T-t-\frac{x}{\rho}\right)
\\
&\qquad
+x\sum_{n=1}^{\infty}\frac{(\lambda/\rho)^{n}}{n!}
(\rho(T-t))^{n-1}e^{-\lambda(T-t)}p^{\ast n}(\rho(T-t)-x)\cdot 1_{T\geq t+\frac{x}{\rho}},
\nonumber
\end{align}
where $p^{\ast n}(t)$ is the probability density function of $Y_{1}+Y_{2}+\cdots+Y_{n}$.
\end{theorem}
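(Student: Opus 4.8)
The plan is to read off the Laplace transform of $\tau_{t}$ derived in the paragraph preceding the theorem (obtained, as noted there, by the same reasoning as in the proof of Theorem~\ref{ConstThm}) and to invert it in $\theta$ separately in the three regimes, reusing the Lagrange--inversion computation already carried out for Theorem~\ref{ThmII}. In the first regime, $t<\ell$ and $x\leq\rho(\ell-t)$, the Laplace transform is $e^{-\theta t-\theta x/\rho}$, so $\mathcal{L}^{-1}\{e^{-c\theta}\}=\delta(\cdot-c)$ immediately gives $f(T;t,x)=\delta(T-t-x/\rho)$; this also matches the obvious fact that before time $\ell$ the surplus decreases deterministically and, since $t+x/\rho\leq\ell$, ruin occurs exactly at the non-random instant $t+x/\rho$.

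For the other two regimes I would isolate the ``effective initial surplus'' $z:=x-\rho(\ell-t)$ (when $t<\ell$, $x>\rho(\ell-t)$) and $z:=x$ (when $t\geq\ell$): by the argument in the proof of Theorem~\ref{ConstThm}, once the process reaches its delay-free phase it is an ordinary Poisson dual model started from surplus $z$, so the Laplace transform factors as $e^{-\theta t}e^{-\beta z}$ with $\beta=\beta(\theta)>0$ the root of \eqref{betaEqn}, and hence $f(T;t,x)=g(T-t)$ where $g$ is the inverse Laplace transform of $\theta\mapsto e^{-\beta(\theta)z}$. To compute $g$ I would apply Lagrange's implicit function theorem with $F(\beta)=e^{-\beta z}$, exactly as in Theorem~\ref{ThmII} (there, $z$ played the role of $x-\rho\int_{t}^{\infty}\bar{L}(s)ds$), obtaining
\begin{equation*}
e^{-\beta z}=e^{-\frac{\theta+\lambda}{\rho}z}+z\sum_{n=1}^{\infty}\frac{(\lambda/\rho)^{n}}{n!}\int_{0}^{\infty}(z+y)^{n-1}e^{-\frac{\theta+\lambda}{\rho}(z+y)}\,dP^{\ast n}(y),
\end{equation*}
and then inverting term by term, with the help of $\mathcal{L}^{-1}\{e^{-c\theta}\}=\delta(\cdot-c)$, to
\begin{equation*}
g(s)=e^{-\frac{\lambda z}{\rho}}\delta\!\left(s-\tfrac{z}{\rho}\right)+z\sum_{n=1}^{\infty}\frac{(\lambda/\rho)^{n}}{n!}(\rho s)^{n-1}e^{-\lambda s}\,p^{\ast n}(\rho s-z)\,1_{s>z/\rho}.
\end{equation*}
Replacing $s$ by $T-t$ and then substituting $z=x-\rho(\ell-t)$, respectively $z=x$, yields the second and third displayed formulas.

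The step requiring some care — though it is mechanical rather than conceptual, being precisely the manipulation carried out in the display \eqref{Similar} — is collapsing the $y$-integral of the Dirac mass $\delta\big(T-t-(z+y)/\rho\big)$ appearing in the $n$-th term: it localizes the integral at $y=\rho(T-t)-z$, admissible exactly when $T>t+z/\rho$, at which point $z+y=\rho(T-t)$, so $(z+y)^{n-1}e^{-\frac{\lambda}{\rho}(z+y)}$ becomes $(\rho(T-t))^{n-1}e^{-\lambda(T-t)}$ and $dP^{\ast n}(y)$ becomes $p^{\ast n}(\rho(T-t)-z)$; keeping track of the prefactor $z$ and the Jacobian of the substitution recovers the stated coefficients. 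One must also justify exchanging $\sum_{n}$ with $\mathcal{L}^{-1}$, which follows from absolute convergence (the $n!$ in the denominator dominates), exactly as in Theorem~\ref{ThmII}. I expect the only genuine obstacle is bookkeeping: correctly identifying $z$ in each regime, keeping the three boundaries $x\leq\rho(\ell-t)$, $x>\rho(\ell-t)$ (with $t<\ell$), and $t\geq\ell$ straight, and propagating the deterministic time shift $T\mapsto T-t$ through the inversion; no probabilistic input beyond the strong-Markov reduction already used for Theorem~\ref{ConstThm} is needed.
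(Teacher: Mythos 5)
Your route is the same as the paper's: take the Laplace transform of $\tau_{t}$ from the display preceding the theorem and invert it in $\theta$ by the Lagrange--inversion expansion of $e^{-\beta z}$ already used in the proof of Theorem~\ref{ThmII}. The first regime ($\tau_{t}=t+x/\rho$ deterministically) and the regime $t\geq\ell$ (an ordinary dual model restarted at time $t$ from surplus $x$, so the density is $g(T-t)$ with $z=x$) are handled exactly as in the paper.

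There is, however, a genuine error in your treatment of the middle regime $t<\ell$, $x>\rho(\ell-t)$, and it sits precisely in the step you present as the strong--Markov reduction. The delay-free phase begins at time $\ell$, not at time $t$: with $z:=x-\rho(\ell-t)$ and $\sigma$ the first-passage time to $0$ of the ordinary dual model started from $z$, one has $\tau_{t}=\ell+\sigma$, so $\mathbb{E}_{X_{t}=x}[e^{-\theta\tau_{t}}]=e^{-\theta\ell}e^{-\beta z}$ and $f(T;t,x)=g(T-\ell)$, not $e^{-\theta t}e^{-\beta z}$ and $g(T-t)$ as you assert. The correct density puts the atom at $T=t+x/\rho=\ell+z/\rho$ (ruin cannot occur earlier; the location $t+z/\rho$ you obtain can even precede $\ell$ when $\rho(\ell-t)<x<2\rho(\ell-t)$, where the surplus is still strictly positive), and the $n$-th series term should carry $(\rho(T-\ell))^{n-1}e^{-\lambda(T-\ell)}p^{\ast n}(\rho(T-t)-x)\cdot 1_{T\geq t+x/\rho}$ rather than $(\rho(T-t))^{n-1}e^{-\lambda(T-t)}p^{\ast n}(\rho(T-t)-z)$. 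You can confirm this by specializing the paper's final theorem (the compact-support case) to $L=1_{[\ell,\infty)}$, i.e.\ $\int_{t}^{\ell}\bar{L}(s)ds=\ell-t$ and $\int_{t}^{\ell}L(s)ds=0$. To be fair, the same $e^{-\theta t}$-for-$e^{-\theta\ell}$ slip occurs in the paper's own preceding display and is inherited by the theorem as stated, so you have faithfully reproduced the paper --- but the factorization does not follow from the reduction you invoke, and the middle-regime formula it yields is not the density of $\tau_{t}$. A secondary bookkeeping point: collapsing $\int\delta\bigl(s-(z+y)/\rho\bigr)\,dP^{\ast n}(y)$ produces a Jacobian factor $\rho$, so the $n$-th coefficient is $\rho\cdot\frac{(\lambda/\rho)^{n}}{n!}(\rho s)^{n-1}=\frac{\lambda^{n}s^{n-1}}{n!}$ (consistent with Kendall's identity); the coefficients you aim to ``recover'' omit this $\rho$, an omission already present in \eqref{Similar}.
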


\begin{proof}
First of all, for $t<\ell$ and $x\leq\rho(\ell-t)$, it is clear that
\begin{equation}
\tau_{t}=t+\frac{x}{\rho},
\qquad
\text{a.s.}
\end{equation}
Notice that similar as in \eqref{Similar}, we can compute that
\begin{equation}
\mathcal{L}^{-1}\{e^{-\beta x}\}
=e^{-\frac{\lambda}{\rho}x}\delta\left(t-\frac{x}{\rho}\right)
+x\sum_{n=1}^{\infty}\frac{(\lambda/\rho)^{n}}{n!}
(\rho t)^{n-1}e^{-\lambda t}p^{\ast n}(\rho t-x)\cdot 1_{t\geq\frac{x}{\rho}},
\end{equation}
where $p^{\ast n}(t)$ is the probability density function of $Y_{1}+Y_{2}+\cdots+Y_{n}$.
Therefore, for any $t\geq\ell$, the probability density function of $\tau_{t}$ is 
given by
\begin{align}
f(T;t,x)
=e^{-\frac{\lambda}{\rho}x}\delta\left(T-t-\frac{x}{\rho}\right)
+x\sum_{n=1}^{\infty}\frac{(\lambda/\rho)^{n}}{n!}
(\rho(T-t))^{n-1}e^{-\lambda(T-t)}p^{\ast n}(\rho(T-t)-x)\cdot 1_{T\geq t+\frac{x}{\rho}},
\end{align}
and for $t<\ell$ and $x>\rho(\ell-t)$, 
\begin{align}
f(T;t,x)
&=e^{-\frac{\lambda}{\rho}(x-\rho(\ell-t))}\delta\left(T-t-\frac{(x-\rho(\ell-t))}{\rho}\right)
\\
&\qquad
+(x-\rho(\ell-t))\sum_{n=1}^{\infty}\frac{(\lambda/\rho)^{n}}{n!}
(\rho(T-t))^{n-1}e^{-\lambda(T-t)}
\nonumber
\\
&\qquad\qquad\qquad
\cdot p^{\ast n}(\rho(T-t)-(x-\rho(\ell-t)))\cdot 1_{T\geq t+\frac{(x-\rho(\ell-t))}{\rho}}.
\nonumber
\end{align}
\end{proof}


More generally, assume that the delay length is shorter than $\ell>0$.
Then, $L(t)=1$ for any $t\geq\ell$ and $L(t)<1$ for any $t<\ell$.
For any $t\geq\ell$, the ruin probability $\psi(x,t)=\mathbb{P}(\tau_{t}<\infty|X_{t}=x)$
satisfies the equation
\begin{equation}
\frac{\partial\psi}{\partial t}
-\rho\frac{\partial\psi}{\partial x}
+\lambda\int_{0}^{\infty}[\psi(x+y,t)-\psi(x)]p(y)dy=0,
\end{equation}
which implies that
\begin{equation}
\psi(x,t)=e^{-\alpha x},
\qquad
\text{for any $t\geq\ell$}.
\end{equation}
For any $t<\ell$, we have the following partial result. 

\begin{theorem}\label{CompactThm}
Assume $L(t)=1$ for any $t\geq\ell>0$.
For $t<\ell$ and $x>\rho(\ell-t)$,
\begin{equation}
\mathbb{P}(\tau_{t}<\infty|X_{t}=x)
=e^{-\alpha x+\alpha\rho\int_{t}^{\ell}\bar{L}(s)ds}.
\end{equation}
\end{theorem}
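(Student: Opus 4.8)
The plan is to recycle the auxiliary function and the Dynkin-type identity from the proof of Theorem~\ref{ThmI}, and to exploit the fact that the hypothesis $x>\rho(\ell-t)$ makes the correction term appearing there \emph{deterministic}, so that the two-sided estimate of Theorem~\ref{ThmI} collapses to an exact formula. Concretely, I would again consider
\[
w(x,t)=(1-e^{-\alpha x})\,e^{\alpha\rho\int_t^\infty\bar{L}(s)\,ds},
\]
which is bounded and continuously differentiable in $x$ and $t$, since $\int_t^\infty\bar{L}(s)\,ds=\int_t^\ell\bar{L}(s)\,ds<\infty$ because $L\equiv1$, i.e.\ $\bar{L}\equiv0$, on $[\ell,\infty)$. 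Repeating the computation in the proof of Theorem~\ref{ThmI} with $c=-\alpha\rho$ shows that, for the operator on the left-hand side of~\eqref{survivalEqn},
\[
\frac{\partial w}{\partial t}-\rho\frac{\partial w}{\partial x}+\lambda L(t)\int_0^\infty[w(x+y,t)-w(x,t)]p(y)\,dy=-\alpha\rho\,\bar{L}(t)\,e^{\alpha\rho\int_t^\infty\bar{L}(u)\,du},
\]
and then It\^{o}'s formula together with optional stopping at $\tau_t$, exactly as in Theorem~\ref{ThmI}, gives identity~\eqref{EqnI}:
\[
\mathbb{P}(\tau_t=\infty|X_t=x)-(1-e^{-\alpha x})e^{\alpha\rho\int_t^\infty\bar{L}(s)\,ds}=-\,\mathbb{E}_{X_t=x}\!\left[\int_t^{\tau_t}\alpha\rho\,\bar{L}(s)\,e^{\alpha\rho\int_s^\infty\bar{L}(u)\,du}\,ds\right].
\]

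The new ingredient is to evaluate the right-hand side exactly. Between times $t$ and $\ell$ the surplus can only be decreased by the running cost, at rate $\rho$, while realized profits only push it upward, so $X_s\geq x-\rho(s-t)\geq x-\rho(\ell-t)>0$ for all $s\in[t,\ell]$; hence the assumption $x>\rho(\ell-t)$ guarantees $\tau_t>\ell$ almost surely. Since $\bar{L}$ vanishes on $[\ell,\infty)$, the random upper limit $\tau_t$ in the integral may be replaced first by $\ell$ and then by $\infty$, and the resulting deterministic integral is the one already evaluated in~\eqref{EqnIII}:
\[
\int_t^{\tau_t}\alpha\rho\,\bar{L}(s)\,e^{\alpha\rho\int_s^\infty\bar{L}(u)\,du}\,ds=\int_t^\infty\alpha\rho\,\bar{L}(s)\,e^{\alpha\rho\int_s^\infty\bar{L}(u)\,du}\,ds=e^{\alpha\rho\int_t^\infty\bar{L}(s)\,ds}-1\qquad\text{a.s.}
\]
Substituting back into~\eqref{EqnI} and simplifying yields
\[
\mathbb{P}(\tau_t=\infty|X_t=x)=(1-e^{-\alpha x})e^{\alpha\rho\int_t^\infty\bar{L}(s)\,ds}-\Bigl(e^{\alpha\rho\int_t^\infty\bar{L}(s)\,ds}-1\Bigr)=1-e^{-\alpha x}e^{\alpha\rho\int_t^\infty\bar{L}(s)\,ds},
\]
whence $\mathbb{P}(\tau_t<\infty|X_t=x)=e^{-\alpha x}e^{\alpha\rho\int_t^\infty\bar{L}(s)\,ds}=e^{-\alpha x+\alpha\rho\int_t^\ell\bar{L}(s)\,ds}$, using once more that $\bar{L}$ is supported in $[0,\ell]$.

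I do not expect a substantive obstacle in this argument; the points deserving a line of care are the validity of optional stopping at the possibly infinite time $\tau_t$ (handled exactly as in Theorem~\ref{ThmI}, via boundedness of $w$ and absolute integrability of $\alpha\rho\,\bar{L}(s)e^{\alpha\rho\int_s^\infty\bar{L}(u)\,du}$ over $[t,\infty)$) and the replacement of $\tau_t$ by $\infty$ inside the integral, which is precisely where the hypothesis $x>\rho(\ell-t)$, i.e.\ $\tau_t>\ell$ a.s., is used. This is also why the result is only partial: when $x\leq\rho(\ell-t)$ ruin can occur before time $\ell$, the upper limit $\tau_t$ genuinely remains random, and the method no longer produces a closed-form answer.
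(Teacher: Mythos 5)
Your argument is correct. The role you assign to the hypothesis $x>\rho(\ell-t)$ --- namely that it forces $\tau_{t}>\ell$ a.s., so that on the support of $\bar{L}$ the random upper limit $\tau_{t}$ may be replaced by $\ell$ and hence by $\infty$ --- is exactly its role in the paper, but the two proofs package the idea differently. The paper conditions at the deterministic time $\ell$: since the model after $\ell$ is the undelayed dual model, $\psi(x,t)=\mathbb{E}_{X_{t}=x}[e^{-\alpha X_{\ell}}]$, and this expectation is evaluated via Dynkin's formula, which produces a linear first-order ODE in the terminal time with solution $e^{-\alpha x+\alpha\rho\int_{t}^{\ell}\bar{L}(s)ds}$. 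You instead recycle identity \eqref{EqnI} from Theorem \ref{ThmI} and observe that its correction term reduces to the deterministic quantity computed in \eqref{EqnIII}, so the two-sided bound collapses to an equality. Both are martingale arguments resting on the same two facts (no ruin before $\ell$; exact solvability after $\ell$). Yours has the merit of reusing machinery already established and of making transparent why the upper bound of Theorem \ref{ThmI} is attained exactly in this region and why the method degenerates for $x\leq\rho(\ell-t)$; the paper's conditioning argument is more self-contained and sidesteps the optional-stopping care needed at the possibly infinite time $\tau_{t}$, which your route inherits from Theorem \ref{ThmI} and correctly flags.
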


\begin{proof}
When $t<\ell$ and $x>\rho(\ell-t)$, the ruin cannot occur
during the time interval $[t,\ell]$ and conditional
on $X_{\ell}=x$, the ruin probability is $e^{-\alpha x}$.
Therefore for $t<\ell$ and $x>\rho(\ell-t)$,
\begin{equation}
\psi(x,t)=\mathbb{E}_{X_{t}=x}\left[e^{-\alpha X_{\ell}}\right].
\end{equation}
By Dynkin's formula, we can compute that
\begin{equation}
\mathbb{E}_{X_{t}=x}\left[e^{-\alpha X_{\ell}}\right]=e^{-\alpha x}
+\int_{t}^{\ell}\left(\rho\alpha+\lambda L(s)\int_{0}^{\infty}[e^{-\alpha y}-1]p(y)dy\right)\mathbb{E}_{X_{t}=x}[e^{-\alpha X_{s}}]ds,
\end{equation}
which implies that
\begin{equation}
\mathbb{E}_{X_{t}=x}\left[e^{-\alpha X_{\ell}}\right]
=e^{-\alpha x}e^{\int_{t}^{\ell}\left(\rho\alpha+\lambda L(s)\int_{0}^{\infty}[e^{-\alpha y}-1]p(y)dy\right)ds}
=e^{-\alpha x+\alpha\rho\int_{t}^{\ell}\bar{L}(s)ds}.
\end{equation}
\end{proof}

\begin{remark}
The result for Theorem \ref{CompactThm} holds when the distribution
of the delay time has compact support. Indeed, the techniques used in
the proof of Theorem \ref{CompactThm} can also be used 
to provide an alternative proof of Theorem \ref{ThmI}.
For any $\epsilon>0$, there exists $\ell>0$ such that
$L(\ell)\geq 1-\epsilon$. Then,
\begin{equation}\label{IneqI}
e^{-\alpha x}\leq\mathbb{P}(\tau_{\ell}<\infty|X_{\ell}=x)\leq e^{-\alpha_{\epsilon}x},
\end{equation}
where $\alpha_{\epsilon}$ is the unique positive value that satisfies:
\begin{equation}
\rho\alpha_{\epsilon}+\lambda(1-\epsilon)\int_{0}^{\infty}[e^{-\alpha_{\epsilon}y}-1]p(y)dy=0.
\end{equation}
For sufficiently large $x$, i.e. $x>\rho(\ell-t)$, following
the same arguments as in the proof of Theorem \ref{CompactThm} and also by \eqref{IneqI}, we have
\begin{equation}
e^{-\alpha x+\alpha\rho\int_{t}^{\ell}\bar{L}(s)ds}
\leq\mathbb{P}(\tau_{t}<\infty|X_{t}=x)
\leq e^{-\alpha_{\epsilon} x+\frac{\rho\epsilon\alpha_{\epsilon}}{1-\epsilon}
+\frac{\rho\alpha_{\epsilon}}{1-\epsilon}\int_{t}^{\ell}\bar{L}(s)ds}.
\end{equation}
\end{remark}

\begin{remark}
An interesting open problem is to compute the ruin probability and the probability
density function of the ruin time for the other regions in Theorem \ref{CompactThm}, that is,
for $x<\rho(\ell-t)$ and $t<\ell$. Even if there is no closed-form in general, it would
still be interesting to see if closed-form formulas exist for some special examples of $L(t)$ and $p(y)$.
\end{remark}

Assume $L(t)=1$ for any $t\geq\ell>0$. 
For $t<\ell$ and $x>\rho(\ell-t)$, we can obtain
the probability density function of the ruin time $\tau_{t}$ in closed-form.

\begin{theorem}
Assume $L(t)=1$ for any $t\geq\ell>0$. 
For $t<\ell$ and $x>\rho(\ell-t)$, we can obtain
the probability density function of the ruin time $\tau_{t}$:
\begin{align}
f(T;t,x)&=e^{-\frac{\lambda}{\rho}(x-\rho\int_{t}^{\ell}\bar{L}(s)ds)}
\delta\left(T-t-\frac{x}{\rho}\right)
\\
&\qquad
+\left(x-\rho\int_{t}^{\ell}\bar{L}(s)ds\right)\sum_{n=1}^{\infty}\frac{(\lambda/\rho)^{n}}{n!}
\left(\rho\left(T-\ell+\int_{t}^{\ell}L(s)ds\right)\right)^{n-1}
\nonumber
\\
&\qquad\qquad\cdot
e^{-\lambda (T-\ell+\int_{t}^{\ell}L(s)ds)}
p^{\ast n}\left(\rho(T-t)-x\right)
\cdot 1_{T\geq t+\frac{x}{\rho}},
\nonumber
\end{align}
where $p^{\ast n}(t)$ is the probability density function of $Y_{1}+Y_{2}+\cdots+Y_{n}$.
\end{theorem}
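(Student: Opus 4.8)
The plan is to mimic the proof of Theorem \ref{ThmII}, but now working with the \emph{exact} Laplace transform of $\tau_t$ that is available in the region $t<\ell$, $x>\rho(\ell-t)$ rather than with its large-$x$ asymptotics. From the discussion following Theorem \ref{ConstThm} (applied with the general $L$ rather than a point mass), or more directly from Theorem \ref{CompactThm} together with a Feynman--Kac/Dynkin argument for the killed exponential, one expects that for $t<\ell$ and $x>\rho(\ell-t)$,
\begin{equation}\label{eq:proposal-LT}
\mathbb{E}_{X_{t}=x}\left[e^{-\theta\tau_{t}}\right]
=e^{-\theta t}\,e^{-\beta\left(x-\rho\int_{t}^{\ell}\bar{L}(s)ds\right)}\,e^{-\theta\int_{t}^{\ell}\bar L(s)ds},
\end{equation}
where $\beta=\beta(\theta)>0$ solves \eqref{betaEqn}. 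Indeed, since ruin cannot occur on $[t,\ell]$, one has $\mathbb{E}_{X_t=x}[e^{-\theta\tau_t}]=\mathbb{E}_{X_t=x}[e^{-\theta(\tau_\ell\wedge\ldots)}]$ and conditioning on $X_\ell$ reduces the problem to the delay-free model started at time $\ell$ with surplus $X_\ell$; the delay-free Laplace transform is $e^{-\theta\ell}e^{-\beta X_\ell}$, and then $\mathbb{E}_{X_t=x}[e^{-\theta(\ell-t)-\beta X_\ell}]$ is computed exactly as in the proof of Theorem \ref{ThmII} (taking $w(x,s)=e^{-\beta x-\theta s - c\int_s^\infty\bar L(u)du}$ with $c=\theta-\beta\rho$, applying It\^o on $[t,\ell]$ where $\mathcal L_s w=0$), giving the exponent $-\beta x + (\beta\rho-\theta)\int_t^\ell\bar L(s)ds$. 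Rearranging yields \eqref{eq:proposal-LT}. The first step of the proof is therefore to record \eqref{eq:proposal-LT}.

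The second step is to invert the Laplace transform in $\theta$. Writing $a:=x-\rho\int_t^\ell\bar L(s)ds$ (note $a>0$ since $x>\rho(\ell-t)\ge\rho\int_t^\ell\bar L(s)ds$) and $b:=t+\int_t^\ell\bar L(s)ds=\ell-\int_t^\ell L(s)ds$, \eqref{eq:proposal-LT} reads $\mathbb{E}_{X_t=x}[e^{-\theta\tau_t}]=e^{-\beta a}e^{-\theta b}$. Exactly as in the proof of Theorem \ref{ThmII}, I would apply Lagrange's implicit function theorem with $F(z)=e^{-za}$ to expand $e^{-\beta a}$ as
\begin{equation*}
e^{-\beta a}=e^{-\frac{\theta+\lambda}{\rho}a}
+a\sum_{n=1}^{\infty}\frac{(\lambda/\rho)^n}{n!}\int_0^\infty (a+y)^{n-1}e^{-\frac{\theta+\lambda}{\rho}(a+y)}\,dP^{\ast n}(y),
\end{equation*}
multiply by $e^{-\theta b}$, and invert term by term using $\mathcal L^{-1}\{e^{-c\theta}\}=\delta(\cdot-c)$ and the shift rule $\mathcal L\{g(\cdot-c)1_{\cdot\ge c}\}=e^{-c\theta}\mathcal L\{g\}$. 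The $n$-th term carries the factor $e^{-\frac{\theta}{\rho}(a+y)-\theta b}$, so its inverse transform is a $\delta$ at $T=b+\frac{a+y}{\rho}$, i.e.\ at $y=\rho(T-b)-a=\rho(T-t)-x$ after substituting $a+\rho b = x+\rho t$; evaluating $dP^{\ast n}(y)$ against this $\delta$ produces $p^{\ast n}(\rho(T-t)-x)$ together with $(a+y)^{n-1}=(\rho(T-b))^{n-1}=\big(\rho(T-\ell+\int_t^\ell L(s)ds)\big)^{n-1}$ and the exponential $e^{-\frac{\lambda}{\rho}(a+y)}=e^{-\lambda(T-b)}=e^{-\lambda(T-\ell+\int_t^\ell L(s)ds)}$, and the indicator $1_{T\ge b+a/\rho}=1_{T\ge t+x/\rho}$. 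The leading term $e^{-\frac{\theta+\lambda}{\rho}a-\theta b}$ inverts to $e^{-\frac{\lambda}{\rho}a}\delta(T-b-a/\rho)=e^{-\frac{\lambda}{\rho}(x-\rho\int_t^\ell\bar L(s)ds)}\delta(T-t-x/\rho)$. Collecting these pieces gives exactly the claimed formula.

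The main obstacle is not any single computation but rather justifying \eqref{eq:proposal-LT} cleanly and checking that all the bookkeeping in the change of variables $y\leftrightarrow T$ is consistent; in particular one must verify the identity $a+\rho b=x+\rho t$ (equivalently $\rho\int_t^\ell\bar L(s)ds$ cancels against $\rho\int_t^\ell L(s)ds$ up to the $\rho(\ell-t)$ term) so that the $\delta$-functions land at $T=t+x/\rho$ and the argument of $p^{\ast n}$ is $\rho(T-t)-x$ exactly as in Theorem \ref{ThmII}. A secondary technical point is term-by-term inversion of the series, which is handled exactly as in the (already accepted) proof of Theorem \ref{ThmII}, so no new analytic input is needed there. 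Once \eqref{eq:proposal-LT} is in hand the proof is essentially a specialization of the computation \eqref{Similar}.
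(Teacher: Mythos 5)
Your proposal is correct and follows essentially the same route as the paper: first derive the exact Laplace transform $\mathbb{E}_{X_t=x}[e^{-\theta\tau_t}]=e^{-\beta(x-\rho\int_t^\ell\bar L(s)ds)}e^{-\theta(\ell-\int_t^\ell L(s)ds)}$ by using that ruin cannot occur on $[t,\ell]$ together with the Dynkin/It\^o computation, then invert it term by term exactly as in \eqref{Similar}, using the identity $a+\rho b=x+\rho t$ to place the Dirac mass at $T=t+x/\rho$ and the argument of $p^{\ast n}$ at $\rho(T-t)-x$. The only blemish is the transient expression $\mathbb{E}_{X_t=x}[e^{-\theta(\ell-t)-\beta X_\ell}]$, which should read $e^{-\theta\ell}\mathbb{E}_{X_t=x}[e^{-\beta X_\ell}]$, but your final Laplace transform and inversion agree with the paper's.
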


\begin{proof}
For any $\theta>0$, since when $t<\ell$ and $x>\rho(\ell-t)$
the ruin cannot occur during the time interval $[t,\ell]$, we have
\begin{align}
\mathbb{E}_{X_{t}=x}[e^{-\theta\tau_{t}}]
=e^{-\theta\ell}e^{-\beta x}e^{\int_{t}^{\ell}\left(\rho\beta+\lambda L(s)\int_{0}^{\infty}[e^{-\beta y}-1]p(y)dy\right)ds}
=e^{-\theta\ell}e^{-\beta x+\beta\rho\int_{t}^{\ell}\bar{L}(s)ds+\theta\int_{t}^{\ell}L(s)ds}.
\end{align}
Notice that similar as in \eqref{Similar}, we can compute that
\begin{equation}
\mathcal{L}^{-1}\{e^{-\beta x}\}
=e^{-\frac{\lambda}{\rho}x}\delta\left(t-\frac{x}{\rho}\right)
+x\sum_{n=1}^{\infty}\frac{(\lambda/\rho)^{n}}{n!}
(\rho t)^{n-1}e^{-\lambda t}p^{\ast n}(\rho t-x)\cdot 1_{t\geq\frac{x}{\rho}},
\end{equation}
where $p^{\ast n}(t)$ is the probability density function of $Y_{1}+Y_{2}+\cdots+Y_{n}$.
Therefore, the probability density function of the ruin time $\tau_{t}$
is given by
\begin{align}
f(T;t,x)
&=e^{-\frac{\lambda}{\rho}(x-\rho\int_{t}^{\ell}\bar{L}(s)ds)}
\delta\left(T-\ell+\int_{t}^{\ell}L(s)ds-\frac{(x-\rho\int_{t}^{\ell}\bar{L}(s)ds)}{\rho}\right)
\\
&\qquad
+\left(x-\rho\int_{t}^{\ell}\bar{L}(s)ds\right)\sum_{n=1}^{\infty}\frac{(\lambda/\rho)^{n}}{n!}
\left(\rho\left(T-\ell+\int_{t}^{\ell}L(s)ds\right)\right)^{n-1}
\nonumber
\\
&\qquad\qquad\cdot
e^{-\lambda (T-\ell+\int_{t}^{\ell}L(s)ds)}
p^{\ast n}\left(\rho \left(T-\ell+\int_{t}^{\ell}L(s)ds\right)-\left(x-\rho\int_{t}^{\ell}\bar{L}(s)ds\right)\right)
\nonumber
\\
&\qquad\qquad\qquad\qquad
\cdot 1_{T-\ell+\int_{t}^{\ell}L(s)ds\geq\frac{(x-\rho\int_{t}^{\ell}\bar{L}(s)ds)}{\rho}},
\nonumber
\\
&=e^{-\frac{\lambda}{\rho}(x-\rho\int_{t}^{\ell}\bar{L}(s)ds)}
\delta\left(T-t-\frac{x}{\rho}\right)
\nonumber
\\
&\qquad
+\left(x-\rho\int_{t}^{\ell}\bar{L}(s)ds\right)\sum_{n=1}^{\infty}\frac{(\lambda/\rho)^{n}}{n!}
\left(\rho\left(T-\ell+\int_{t}^{\ell}L(s)ds\right)\right)^{n-1}
\nonumber
\\
&\qquad\qquad\cdot
e^{-\lambda (T-\ell+\int_{t}^{\ell}L(s)ds)}
p^{\ast n}\left(\rho(T-t)-x\right)
\cdot 1_{T\geq t+\frac{x}{\rho}}.
\nonumber
\end{align}
The intuition behind $\tau_{t}\geq t+\frac{x}{\rho}$ is that the ruin can
not occur during the time interval $[t,\ell]$. Starting at time $\ell$, the surplus is $x-\rho(\ell-t)$
and thus it will take at least the amount of time $\frac{x-\rho(\ell-t)}{\rho}$ after time $\ell$ 
to get ruined. Thus, $\tau_{t}\geq\ell+\frac{x-\rho(\ell-t)}{\rho}=t+\frac{x}{\rho}$ a.s.
\end{proof}


\section{Numerical Studies}

Let us first illustrate the case when the delay time is constant $\ell$. 
Let us assume that $p(y)=\nu e^{-\nu y}$, then $\alpha=\frac{\lambda}{\rho}-\nu$, 
and by Theorem \ref{ConstThm}, we have
\begin{equation}
\psi(x,t):=\mathbb{P}(\tau_{t}<\infty|X_{t}=x)
=
\begin{cases}
1 &\text{for $t<\ell$ and $x\leq\rho(\ell-t)$}
\\
e^{-(\frac{\lambda}{\rho}-\nu)(x-\rho(\ell-t))} &\text{for $t<\ell$ and $x>\rho(\ell-t)$}
\\
e^{-(\frac{\lambda}{\rho}-\nu) x} &\text{for $t\geq\ell$}
\end{cases}.
\end{equation}
Take $\ell=2$, $\nu=0.01$, $\lambda=0.02$ and $\rho=1$. We illustrate
the ruin probability in a heatmap plot in Figure \ref{HeatMap}.
The statistics are also summarized in Table \ref{HeatMapTable}.

\begin{table}[htb]
\centering 
\caption{Illustration of the ruin probability against
the initial surplus and various present times. Here, we take $\ell=2$, $\rho=1$, $\nu=0.01$ and $\lambda=0.02$.}
\begin{tabular}{|c||c|c|c|c|c|c|} 
\hline 
$\psi(x,t)$ & $x=0.5$ & $x=1.5$ & $x=2.5$ & $x=3.5$ & $x=4.5$ 
\\
\hline
\hline
$t=0.5$ & 1.000 & 1.000 & 0.990 & 0.980 & 0.970
\\
\hline
$t=1.5$ & 1.000 & 0.990 & 0.980 & 0.970 & 0.961 
\\
\hline
$t=2.5$ & 0.995 & 0.985 & 0.975 & 0.966 & 0.956 
\\
\hline
$t=3.5$ & 0.995 & 0.985 & 0.975 & 0.966 & 0.956 
\\
\hline
$t=4.5$ & 0.995 & 0.985 & 0.975 & 0.966 & 0.956 
\\
\hline 
\end{tabular}
\label{HeatMapTable} 
\end{table}

\begin{figure}[htb]
\begin{center}
\includegraphics[scale=0.90]{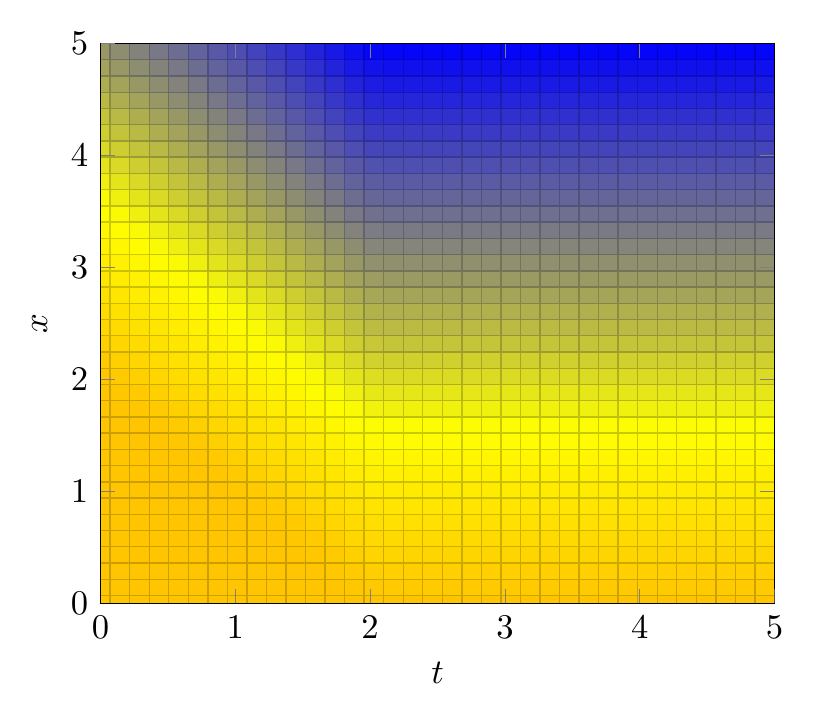}
\caption{This is the heatmap plot of the ruin probability as a function
of the initial surplus $x$ and the present time $t$. The ruin probability is constantly $1$
in the region $t<\ell$ and $x\leq\rho(\ell-t)$ and the ruin probability depends only
on $x$ in the region $t\geq\ell$. Here, we take $\ell=2$, $\nu=0.01$, $\lambda=0.02$ and $\rho=1$.}
\label{HeatMap}
\end{center}
\end{figure}

Next, let us consider the case when the delay time is bounded. 
As an illustration, let us consider $L(t)=t$ for $t<1$ and $L(t)=1$ otherwise.
Then, $\ell=1$.
That is, the delay time has a uniform distribution on $[0,1]$. Let us also
take $p(y)=\nu e^{-\nu y}$. Then, for any $t\leq\ell$ and $x>\rho(\ell-t)$, by Theorem \ref{CompactThm}, 
we have
\begin{equation}
\psi(x,t):=\mathbb{P}(\tau_{t}<\infty|X_{t}=x)=e^{-(\frac{\lambda}{\rho}-\nu)x+(\frac{\lambda}{\rho}-\nu)\frac{\rho}{2}(1-t)^{2}}.
\end{equation}
Let us take $\rho=\nu=1$, $\lambda=2$. We illustrate the ruin probability
by making a plot of the ruin probability $\psi(x):=\psi(x,t)$ against the initial surplus $x$
for fixed present times $t$ in Figure \ref{CompactPicture}.
The statistics are also summarized in Table \ref{CompactTable}.
Note that in Table \ref{CompactTable}, the entries of N.A. do not mean
that the ruin probabilities do not exist. They are labelled N.A. only because
Theorem \ref{CompactThm} does not provide closed-form formulas for the ruin probabilities
when $x<\rho(\ell-t)$ and $t<\ell$. It will be an interesting problem to obtain
closed-form formulas at least for some special cases in this region. 

\begin{table}[htb]
\centering 
\caption{Illustration of the ruin probability against
the initial surplus for various present times. Here, we take $\ell=1$, $\rho=\nu=1$ and $\lambda=2$.}
\begin{tabular}{|c||c|c|c|c|c|c|} 
\hline 
$\psi(x,t)$ & $x=0.2$ & $x=0.4$ & $x=0.6$ & $x=0.8$ & $x=1.0$ 
\\
\hline
\hline
$t=0.25$ & N.A. & N.A. & N.A. & 0.595 & 0.487 
\\
\hline
$t=0.50$ & N.A. & N.A. & 0.622 & 0.509 & 0.417 
\\
\hline
$t=0.75$ & N.A. & 0.692 & 0.566 & 0.464 & 0.380 
\\
\hline
$t=1.00$ & 0.819 & 0.670 & 0.549 & 0.449 & 0.368 
\\
\hline 
\end{tabular}
\label{CompactTable} 
\end{table}

\begin{figure}[htb]
\begin{center}
\includegraphics[scale=0.90]{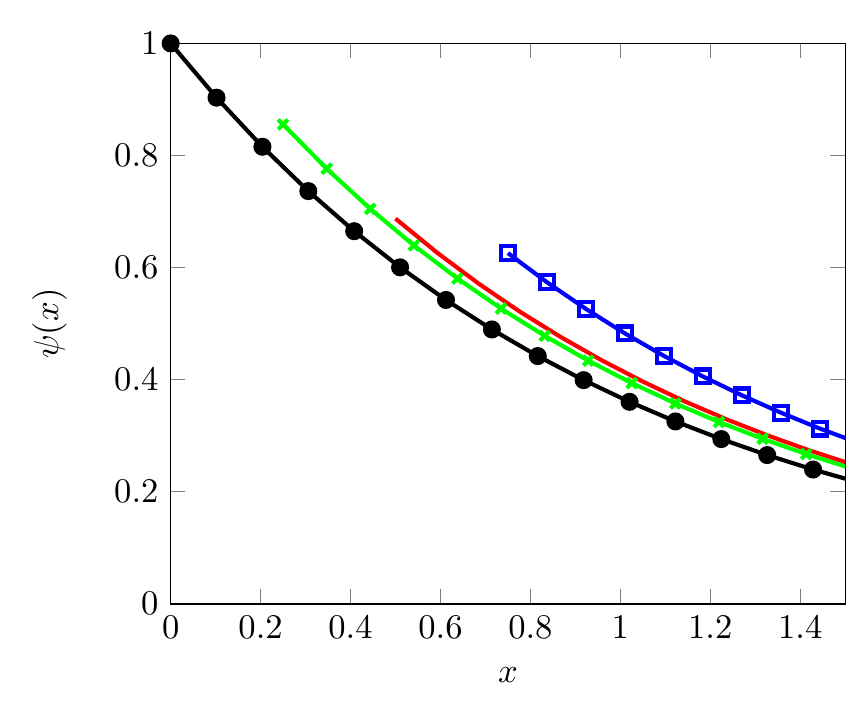}
\caption{This is a plot of the ruin probability $\psi(x)$ against the initial surplus $x$
for various present times $t=0.25$, $t=0.50$, $t=0.75$ and $t=1.0$. 
Here, we take $\ell=1$, $\rho=\nu=1$ and $\lambda=2$. The blue line marked with square
corresponds to $t=0.25$; the red solid line corresponds to $t=0.50$; the green line marked with x 
corresponds to $t=0.75$; the black line marked with circle corresponds to $t=1.0$.
We only make the plots for the region $t\leq\ell$ and $x>\rho(\ell-t)$ since
this is the region where we have analytical tractability according to Theorem \ref{CompactThm}.}
\label{CompactPicture}
\end{center}
\end{figure}


\section*{Acknowledgements}
The author thanks the anonymous referee and the editor for helpful suggestions
and comments that have greatly improved the quality of the manuscript. 
The author is also grateful to Arash Fahim and Hongbiao Zhao for helpful discussions.
The author acknowledges the support from NSF Grant DMS-1613164.


\end{document}